\numberwithin{equation}{section}
\newcommand\be{\begin{equation}}
\newcommand\ee{\end{equation}}
\newcommand\td[1]{\tilde{#1}}
\newcommand\ii{\mathrm{i}} 
\newcommand\e{\epsilon}
\newcommand\ve{\varepsilon}
\newcommand\vt{\vartheta}
\newcommand\al{\alpha}
\newcommand\pa{\partial}
\newcommand\BR{\mathbb{R}}
\newcommand\BP{\mathbb{P}}
\newcommand\BC{\mathbb{C}}
\newcommand\BZ{\mathbb{Z}}
\newcommand\BT{\mathbb{T}}
\newcommand\cp{\mathcal{P}}
\newcommand\CalN{\mathcal{N}}
\newcommand\CalM{\mathcal{M}}
\newcommand\cZ{\mathcal{Z}}
\newcommand\cY{\mathcal{Y}}
\newcommand\cc{\mathcal{C}} 
\newcommand\co{\mathcal{O}} 
\newcommand\cf{\mathcal{F}} 
\newcommand\cT{\mathcal{T}} 
\newcommand\CalH{\mathcal{H}} 
\newcommand\CalV{\mathcal{V}} 
\newcommand\mm{m^{(2)}} 
\newcommand\hcf{\hat \cf}
\newcommand\zb{\bar z}
\newcommand\bz{\boldsymbol{z}} 
\newcommand\pp{\mathbf{p}} 
\newcommand\bp{\boldsymbol{p}} 
\newcommand\bt{\boldsymbol{t}} 
\newcommand\bn{\boldsymbol{n}} 
\newcommand\bT{\boldsymbol{T}} 
\newtheorem{theorem}{Theorem}[section]
\newtheorem{proposition}[theorem]{Proposition}
\newtheorem{lemma}[theorem]{Lemma}
\newtheorem{corollary}[theorem]{Corollary}
\newcommand{\R}{{\mathbb R}}
\newcommand{\C}{{\mathbb C}}
\newcommand{\CF}{{\mathcal{F}}}
\newcommand{\CO}{{\mathcal{O}}}
\newcommand{\Lie}{{\rm Lie}}
\newcommand{\Th}{{\rm Th}}
\newcommand{\rr}{{\rangle}}
\renewcommand{\c}{{\mathfrak{c}}}
\renewcommand{\ll}{{\langle}}
\renewcommand{\tt}{{\bf t}}
\renewcommand{\P}{{\mathbb P}}
\title{\bfseries Shifts of prepotentials}
\author[a,b,c]{Nikita Nekrasov}
\author[d]{Nicolò Piazzalunga}
\author[d]{Maxim Zabzine}
\author[e]{\authorcr \vspace{5mm} with an appendix by Mich\`ele Vergne}
\affil[a]{Simons Center for Geometry and Physics,
 SUNY Stony Brook, NY 11794-3636, USA}
\affil[b]{Center for Advanced Studies\footnote{visiting professor},
 Skoltech, Moscow, Russia}
\affil[c]{Kharkevich Institute for Information Transmission Problems\footnote
 {on leave of absence}, Moscow, Russia}
\affil[d]{Department of Physics and Astronomy, Uppsala University,
 Box 516, SE-75120 Uppsala, Sweden}
\affil[e]{Universit\'e Paris 7 Diderot, Institut Math\'ematique de Jussieu,
 Sophie Germain, case 75205, Paris Cedex 13}
\date{}
\begin{document}
\maketitle

\begin{abstract}
We study the dynamics of supersymmetric theories in five dimensions
obtained by compactifications of $M$-theory on a Calabi-Yau threefold $X$.
For a compact $X$, this is determined by the geometry of $X$,
in particular the K{\"a}hler class dependence of the volume of $X$ determines
the effective couplings of vector multiplets.
Rigid supersymmetry emerges in the limit of divergent volume,
prompting the study of the structure of Duistermaat-Heckman formula
and its generalizations for non-compact toric K{\"a}hler manifolds.
Our main tool is the set of finite-difference equations obeyed
by equivariant volumes and their quantum versions.
We also discuss a physical application of these equations
in the context of seven-dimensional gauge theories,
extending and clarifying our previous results.
The appendix by M.~Vergne provides an alternative local proof of the shift equation.
\end{abstract}

\newpage
\tableofcontents

\begin{refsection}

\section{Introduction and summary of results}
\label{s:intro}

In realizations of topological, or partially topological, field theories
via supersymmetric field theories,
one often encounters the paradigm of ``fields, equations, symmetries''
as the basic setup \cite{Witten:1990bs} of an enumerative moduli problem,
for which the supersymmetric field theory provides an integral representation.
The linearized equations and the linearized symmetry transformations
define the differentials in a three-term complex $\cT$,
whose cohomology $H^{0}({\cT})$ in degree zero describes infinitesimal automorphisms,
the cohomology in degree one $H^{1}({\cT})$ describes infinitesimal deformations,
while $H^2 ({\cT})$ packs the obstructions to deformations
(the first order deformations may not extend to the second order,
the obstruction is given by a quadratic Kuranishi map
from $H^1 ({\cT})$ to $H^{2}({\cT})$).

Physically, the zero modes of fermions in the supermultiplet of a cohomological field theory
belong to these cohomology groups.
The contribution to the path integral of a specific point in the moduli space is non-zero
if the Yukawa interactions of the fermions and bosons saturate the fermionic zero modes,
while the integral over the bosonic modes is finite.
The latter is problematic if the moduli space in question is non-compact.
The non-compactness of the moduli space of solutions
to the partial differential equations representing, e.g., the BPS equations
could be both of ultra-violet (such as point-like instantons) and of infra-red
(such as the runaway of a localized solution to infinity in space-time) nature.

While ultraviolet noncompactness requires knowledge of microscopic degrees of freedom of the theory,
in practice one fixes it by modifying the theory at short distances in a controllable way.
For example, one uses noncommutativity in gauge theories with unitary gauge groups,
or coupling to gravity in two-dimensional sigma models.

Infrared noncompactness is partly cured by $\Omega$-deformation \cite{Nekrasov:2002qd}.
Mathematically, this means working equivariantly with respect to some rotational symmetries,
or, more generally, isometries of the background.
Localization techniques such as Duistermaat-Heckman formula and Berline-Vergne-Atiyah-Bott theorem
are ubiquitous in mathematics and physics, and play key roles in fields as diverse as
enumerative geometry \cite{Kontsevich:1994na,Mirzakhani:2006eta,Okounkov:2015spn},
exact calculations of effective actions and BPS-protected correlators
\cite{Nekrasov:2002qd,MR2227881,Nekrasov:2003rj},
gauge/string \cite{Maulik:2003rzb} and
quantum gravity/topological string \cite{Iqbal:2003ds} dualities,
calculations of black hole entropy \cite{Banerjee:2009af},
and $a$-maximization \cite{Martelli:2005tp,Martelli:2006yb}.

Despite the wide applications, the justification of the equivariant approach
and the precision of its predictions for a given quantum field theory is not always clear.
There are numerous puzzles related to the importance of boundary terms
\cite{Sethi:1997pa,Lee:2016dbm},
decoupling or non-decoupling of gravity modes,
(non)existence of global symmetries \cite{Harlow:2018jwu},
for which one turns on chemical potentials in the twisted Witten index
\cite{Nekrasov:Japan}.

In this paper we shall be discussing $M$-theory compactifications on a Calabi-Yau threefold $X$
times five-dimensional Minkowski space ${\BR}^{1,4}$,
or further compactifications of the form ${\cY} \times {\BR}^{1}$, with a Calabi-Yau five-fold $\cY$.
Moreover, we shall also study the limits where the manifolds $X$ or $\cY$ have infinite volumes,
effectively decoupling (super)gravity in five or one dimension.
More precisely, by working in a limit of low energy in five (one) dimensions,
one does not excite eleven-dimensional (super)gravity modes,
which effectively are frozen in the background.
But let us start with the case of compact $X$.

The classical ${\CalN}=1$ five-dimensional supergravity theory,
obtained by compactification of eleven-dimensional supergravity on a Calabi-Yau threefold $X$,
contains $h^{1,1}(X)-1$ vector multiplets,
whose low energy dynamics is governed by the superspace action,
derived from the homogeneous degree-three prepotential \cite{Gunaydin:1983bi,Cadavid:1995bk}
\be
\cf (T) = \frac{1}{3!} \sum_{a,b,c =1}^{h^{1,1}(X)} c_{abc} T^a T^b T^c
\label{eq:prep5}
\ee
where $T^a$ are linear coordinates on $H^{1,1}(X)$ that we specify below.
The matrix $c_{abc}$ is the matrix of the triple intersection form,
so that in fact \cref{eq:prep5} is the symplectic volume of $X$
\be
\cf(T) = \int_{X} e^{k}
\label{eq:expform}
\ee
Here $k$ is the K{\"a}hler form associated to the Calabi-Yau metric on $X$,
which can be expanded in some integral basis $({\omega}_{a})$ of $H^{2}(X)$
(recall that for proper Calabi-Yaus $h^{2,0} = h^{0,2} = 0$)
\be
k = \sum_{a=1}^{h^{1,1}(X)} T^{a} {\omega}_{a} \, , \qquad
{\omega}_{a} \in H^{2}(X, {\BZ}) \cap H^{1,1}(X)
\ee
As it stands, \cref{eq:expform} is a formal expression
where one understands that the exponential of a two-form is expanded in Taylor series,
and only the six-form can be integrated over $X$, the rest of the series integrating to zero.
However, in extending the de Rham differential to the so-called equivariant derivative,
one appreciates the use of the exponential form of the integral \cref{eq:expform}
in that it leads to a streamlined formulation of certain \emph{WKB-exact} integration formulas.

In writing \cref{eq:expform} or \cref{eq:prep5} one uses a redundant set of parameters,
the homogeneous coordinates on the Coulomb branch of the moduli space of vacua,
which is parametrized by the vacuum expectation values of the real scalars in the vector multiplets,
which, as we said earlier, come in the amount of $n_{V} = h^{1,1}(X)-1$.
The number of $U(1)$ gauge fields, however, is $h^{1,1}(X)$,
as these are obtained by Kaluza-Klein (KK) decomposing the three-form $C^{(3)}$
of eleven-dimensional supergravity
\be
C^{(3)} = \sum_{a=1}^{h^{1,1}(X)} A^{a} \wedge {\omega}_{a}
\label{eq:3f1f}
\ee
where in the KK approximation the ${\omega}_{a}$ are integral harmonic two-forms on $X$,
while $A^a$ are the $U(1)$ gauge fields propagating in ${\BR}^{1,4}$.
The subtlety of five-dimensional supergravity is that a specific linear combination
of $U(1)$ vector fields in \cref{eq:3f1f},
the graviphoton, is in the supermultiplet of graviton.
The remaining $U(1)$ fields fall in the supermultiplets containing
the real scalars in the moduli space ${\CalM}_{V}$,
which is locally parametrized by the scalars $T^{a}$, restricted by the relation
\be
{\cf} (T) = - {\ii} \, {\Omega} \wedge {\bar\Omega}
\ee
set up by another part of the Calabi-Yau data,
the normalization of the holomorphic $(3,0)$-form $\Omega$.
Together with the choice of the harmonic $3$-form (flat $3$-form field) on $X$ and their superpartners,
they form the massless hypermultiplet content of the effective five-dimensional theory.
The tangent space to the hypermultiplet factor ${\CalM}_{H}$
in the scalar manifold of the five-dimensional theory
coincides with the space of infinitesimal deformations of the Calabi-Yau structure,
preserving the K{\"a}hler structure, compatible with $\Omega$.

In making $X$ non-compact we face the difficult problem of finding the $L^2$-spectrum
of the Hodge Laplacian acting on the space of two-forms.
Such harmonic forms enter \cref{eq:3f1f} to define
the dynamical gauge fields propagating in five dimensions.
The total number of linearly independent harmonic two-forms,
not necessarily $L^2$-normalizable, is in general larger.
We can interpret the excess as corresponding to non-dynamical vector fields,
which can be viewed as flavor symmetries of the effective five-dimensional supersymmetric theory.
For example, in $SU(2)$ gauge theory in five dimensions,
the vacuum expectation value ${\varphi} {\sigma}_{3} = \langle {\phi} \rangle$
of the real scalar $\phi$ in the vector multiplet
breaks the gauge symmetry to its maximal torus $U(1)$.
The $SU(2)$ instantons, which are solitonic particles, have a mass
given by the sum of the inverse renormalized coupling squared $g^{-2}$
and the absolute value $|{\varphi}|$.
This agrees with a BPS formula where $g^{-2}$ is interpreted
as a scalar in a vector multiplet of some flavor \emph{topological} symmetry
(the gauge field $\theta$ in that multiplet would give rise to the coupling
$\theta \wedge \operatorname{tr} F \wedge F$ producing the theta term in four dimensions,
if the theory is compactified further on a circle).
The theory also has a BPS particle, the $W$-boson, of mass $|{\varphi}|$.

The gauge symmetry and the \emph{topological} symmetry correspond,
in a realization of $SU(2)$ theory as $M$-theory ``compactified'' on
the total space $X$ of the ${\co}(-2,-2)$ line bundle over ${\BP}^{1} \times {\BP}^{1}$,
to the degree-two cohomology of $X$.
As the latter can be contracted to the product of two spheres,
the degree-two cohomology is two-dimensional,
it is generated by the first Chern classes $x_{1}, x_{2}$ of the line bundles ${\co}(1)$
over the first and second ${\BP}^{1}$ factors, respectively.
The manifold $X$ has, additionally, one-dimensional degree-four cohomology, generated by $x_{1}x_{2}$.
Its Poincare-dual can be represented by a two-form with compact support.
The latter acts on $H^{2}(X)$ by shifts.
In the $M$-theory realization the $W$-boson and instanton particles
are $M2$-branes wrapping the two ${\BP}^{1}$ factors in ${\BP}^{1} \times {\BP}^{1}$.
Which one is the $W$-boson and which one is an instanton depends on the relative size of the two cycles.
The presence of the $|{\varphi}|$ term in the formulas for both masses reflects the action of
$H^{2}_\text{comp}(X)$ on $H^{2}(X)$.

Now we come to the main point of this paper.
In most geometric realizations of quantum field theories in five dimensions
(and their compactified four-dimensional versions)
the ``internal'' Calabi-Yau manifold $X$ is not only non-compact, it is a toric manifold.
For toric manifolds $X$, there are additional non-dynamical symmetries
in the effective five-dimensional theory.
Indeed, any isometry of $X$ leads to a Kaluza-Klein-like gauge field in five dimensions.
Thus, $H^1 ({\cT}) $ of the deformation complex corresponds
to the tangent space to the space of scalars,
while $H^0 ({\cT})$ corresponds to gauge fields in the effective five-dimensional theory,
both dynamical and non-dynamical.
We leave to further study the interpretation of $H^2({\cT})$ cohomology
(for compact $X$ the theorem of Tian-Todorov says that the obstructions vanish, $H^2({\cT}) =0$).
Presently, let us focus on the non-generic enhancement of $H^0({\cT})$ thanks to the isometries of $X$.

Consider the gravitational background of the form $X \times Y$,
with internal manifold $X$ with metric $g = g_{mn} (x) dx^{m}dx^{n}$
and spacetime $Y$ with metric $h = h_{\mu\nu}(y) dy^{\mu}dy^{\nu}$.
Suppose the Lie group $G$ acts on $(X,g)$ by isometries,
generated by vector fields $V_{a} = V_{a}^{m}(x) \frac{\pa}{{\pa} x^{m}}$,
for $a = 1, \ldots, \dim G$.
The Kaluza-Klein ansatz gives us gauge fields $A^{a} = A^{a}_{\mu} (y) dy^{\mu}$ on $Y$.
Geometrically, turning on these gauge fields means modifying $X \times Y$
to the total space $Z$ of a locally trivial bundle $Z \to Y$,
with the $G$ connection $A$ that features in the metric on $Z$:
\be
ds^2 = h_{\mu\nu}(y) dy^{\mu} dy^{\nu} +
g_{mn}(x) \left( dx^{m} + V_{a}^{m}(x) A^{a}_{\mu}(y) dy^{\mu} \right)
\left( dx^{n} + V_{b}^{n}(x) A^{b}_{\nu}(y) dy^{\nu} \right)
\label{eq:kkmetr}
\ee
For compact $X$ the gauge fields $A^a$ are dynamical,
and the Einstein-Hilbert action gives, in the limit of small volume of $X$,
the Yang-Mills action with gauge group $G$.
Even if the manifold $X$ is non-compact, so that the gauge fields $A^a$ are non-dynamical,
they can be turned on as background fields.

In our present setup, of all the isometries of $X$ we need those which preserve,
in addition to the metric $g$, the covariantly constant spinors,
generating rigid supersymmetry in five spacetime dimensions.
Practically, for toric $X$ we are taking the generators of the torus,
preserving the holomorphic top-degree form.
There are two linearly independent vector fields for Calabi-Yau threefolds,
let us call them $V_{1}$ and $V_{2}$.
The associated five-dimensional background gauge fields $A^1$ and $A^2$
belong to the vector multiplets.
Their lowest components are the real scalars ${\ve}^{1}, {\ve}^{2}$.
Without breaking five-dimensional super-Poincare invariance,
we can deform the theory by turning on a non-zero value of these scalars.
The so-deformed theory deserves a separate study.
Presently we assume this deformation corresponds to the equivariant extension
of differential forms that we employ below.

Once the five-dimensional theory is compactified on a circle,
in addition to the real scalars ${\ve}^{1,2}$
one can turn on the components of the gauge fields $A^1, A^2$ along the circle.
Geometrically it means fibering the Calabi-Yau threefold over the circle,
with the isometric twist of the fiber.
In other words we take a quotient of ${\BR}^{1} \times X$
by the action of $\BZ$ by simultaneous translation along $\BR$ by $2\pi\beta$
and the isometry transformation of $X$
by $e^{{\beta} \left( {\varphi}^1 V_1 + {\varphi}^2 V_2 \right)}$.
In this way the real parameters ${\ve}^{1,2}$ get complexified
to $q_{1}, q_{2} \in {\BC}^{\times}$,
\be
q_{1} = e^{-{\ve}^{1} + {\ii} {\beta} {\varphi}^{1}}\, , \quad
q_{2} = e^{-{\ve}^{2} + {\ii} {\beta} {\varphi}^{2}}
\ee
One can make an additional twist by the isometry of $X$ not preserving the holomorphic top form,
provided we turn on the compensating twist of the remaining space ${\BR}^{4}$.
More generally, we can replace $X \times {\BR}^{4}$ by some toric Calabi-Yau fivefold $\cY$.

In the earlier work \cite{Zotto:2021xah}, a twisted version of
a nonabelian gauge theory in seven dimensions was studied.
Mathematically, the theory is the nonabelian rank $n$ K-theoretic Donaldson-Thomas theory
on a three-fold (which in ref.~\cite{Zotto:2021xah} was taken to be a toric Calabi-Yau manifold).
We are interested in factorization properties of the supersymmetry-protected sector of that theory
(fundamentally the theory needs an ultraviolet completion,
and the factorization might be sensitive to those details).
Since our main tool is supersymmetric localization, in fact, equivariant localization
with respect to the isometries of the Calabi-Yau space,
we are forced to work on noncompact spaces.
This leads to several puzzles.

The physical problem that involves passing to noncompact Calabi-Yau manifolds is the realization
of rigid supersymmetric theory as a limit of supergravity, in which the supergravity multiplet freezes,
while five-dimensional vector multiplets remain dynamical \cite{Douglas:1996xp,Intriligator:1997pq}.
This is usually realized by a specific large $T$ limit of \cref{eq:prep5},
where some of the $T^a$ variables go to infinity (essentially,
as $M_\text{Planck} \to \infty$) while others remain finite.
We denote the finite K{\"a}hler parameters by $t^a$ in what follows.
This leads to the expression
\be
{\cf} (T) \to F(t; {\Lambda}) = {\Lambda}^{2} F_{1}(t)  + {\Lambda} F_{2}(t) + F_{3}(t)
\label{eq:rigidlim}
\ee
where the coefficients $F_{1}, F_{2}$ of the divergent terms as ${\Lambda} \to \infty$
are linear and quadratic in $t$.
In the field theory interpretation of \cref{eq:rigidlim}
the leading divergence $\propto F_1$ is irrelevant,
while the term quadratic in $t$ gives the bare gauge couplings.
The cubic term describes the effect of (one-loop) corrections
to the effective gauge couplings and five-dimensional Chern-Simons term,
induced by the loops of some charged heavy fields.

The 11d origin of \cref{eq:prep5} is the Chern-Simons term
in the action of eleven-dimensional supergravity \cite{CREMMER1978409}.
As is often the case with Chern-Simons terms, it can be expressed as
a twelve-dimensional integral \cite{Witten:1996md}
\be
\int_{{\cZ}^{12}} G^{(4)} \wedge G^{(4)} \wedge G^{(4)}\, ,
\ee
where ${\pa}{\cZ}^{12}$ is the eleven-dimensional spacetime.
The main case of interest for us is the eleven-dimensional manifold,
which is a fibration over a circle $S^1$ with fiber a Calabi-Yau fivefold $\cY$.
For toric $\cY$ the twelve-dimensional extension ${\cZ}^{12}$ can be constructed
as a K{\"a}hler manifold with boundary,
which is a holomorphic fibration over a unit disk $|w| \leq 1$.

Our goal is to construct equivariant extensions of \cref{eq:expform}
and its higher analogue
\be
\cf (t,\ldots) = \int_{{\cZ}^{12}} \exp G^{(4)}
\label{CS-11d-gen}
\ee
and understand the structure of the full answer for non-compact spaces,
without making ad hoc choices.
Details are provided below.

Now let us clarify the main subtlety.
The definition of the triple intersection form in \cref{eq:prep5,eq:expform}
does not make sense for toric Calabi-Yau threefolds, which are always non-compact.
Nevertheless, there is abundant literature discussing these intersections
in the context of topological strings and 5d gauge theories.
The usual logic is the following:
geometric intersection numbers when one of the divisors is compact are fixed unambiguously,
the rest (when all the divisors involved are non-compact) is fixed in some ad hoc fashion.

As an example, let us look at local ${\BP}^{1} \times {\BP}^{1}$
i.e.\ the total space of the line bundle ${\co}(-2,-2) \rightarrow \BP^1 \times \BP^1$.
There are two versions of $\cf(t)$:
an asymmetric version in the context of 5d gauge theories \cite[section B.3.1]{Closset:2018bjz},
and a symmetric one in the context of topological strings \cite[section 2.1]{Klemm:2002pa}.
These two versions correspond to different ways of fixing
the intersection numbers of non-compact divisors.
On the other hand, when we try to compute them from first principles via equivariant localization,
and the Duistermaat–Heckman (DH) formula, we are unable to reproduce the known results in the literature.
This question is not really Calabi-Yau three-fold-specific.
Indeed, it applies to all non-compact toric K{\"a}hler quotients.
The situation is even more confusing since there are examples where, upon certain choices,
one reproduces the known intersection matrix as the first regular term in DH expansion
(see the example of $A_{n-1}$ spaces \cite{Moore:1997dj})
and there are cases where there is no way to get rid of equivariant parameters
and extract a polynomial in K{\"a}hler parameters.

In this work we study systematically the equivariant DH formula on non-compact toric K\"ahler manifolds
and explain how to deduce geometric information from it.
We use equivariant volumes as regularized expressions for $\cf (T)$,
so that the role of the cutoff parameter $\Lambda$ will be played
by the inverse powers ${\ve}^{-1}$ of the $\Omega$-deformation parameters.
This is similar in spirit to the approach of ref.~\cite{Moore:1997dj}.
The crucial new ingredient is the set of difference equations (shift equations),
representing the action of cohomology with compact support on de Rham cohomology,
which we use to extract the finite pieces $F_{3}(t)$ and constrain $F_{2}(t)$.

In a way, this is similar to the time-tested dispersion relations in quantum field theory,
which are used to constrain the loop contributions to Green's functions.
The ambiguities in reconstructing the real part of the Green's function
given its imaginary part, which is computed by lower-order diagrams,
correspond to the ultraviolet counterterms.
In our story these ambiguities correspond to the effects of frozen supergravity fields.

We concentrate on manifolds $X$ with non-zero second cohomology
with compact support $H_\text{comp}^2 (X)$,
as these are relevant to geometric engineering and our equations are simpler here.
Upon certain (non-canonical) choices, we can solve this shift equation
and produce the analog of intersection polynomials for non-compact spaces.
The non-uniqueness of this polynomial mirrors the relation between
$H_\text{comp}^2(X)$ and $H^2(X)$.
The shift equation helps us structure the full equivariant result in a geometric fashion,
without throwing away singular terms.
Its physical meaning is to compare two spaces with the same asymptotics
but different values of K{\"a}hler parameters, in the same chamber, to extract finite data.
We also analyze the quantum mechanical analog of our shift equation
and the related semi-classical expansion.
In this context, the difference of partition functions between singular and resolved spaces
is effectively reduced to a quantum mechanical problem on a compact space.
We observe that higher-dimensional compact support cohomologies lead to
higher-order difference equations for equivariant volumes and partition functions.
We apply this set of ideas to toric Calabi-Yau fivefolds,
to get the correct classical action for DT theory, in the presence of four-cycles.

Several mathematical works addressed the issue of
localization on non-compact spaces,
see e.g.\ the thesis \cite{szilagyi:phd} and references therein,
as well as refs.~\cite{MR3371718,vergne29}.
Rather than focusing on technical details,
here we try to isolate a simple phenomenon,
which surprisingly was overlooked in prior literature,
by looking at some key examples.
This work is thus in the field of experimental mathematics.
Most likely, the action of compact support cohomology on de Rham cohomology
can be extended to the equivariant setting,
and one should study the action of compact support equivariant cohomology
on equivariant cohomology \cite{MR1722000, AST_1993__215__109_0}.
We believe our constructions can be extended to more complex setups,
perhaps even to the full quantum cohomology ring,
although the details remain to be worked out,
and similarly for the relation of our shift equations
to the chamber structure and wall-crossing.

It may be that our ideas play some role in other generalizations,
for example we find some resemblance between our \cref{RC-double-shift} for the conifold
and a similar shift equation for resolved conifold in the context of
Gromov-Witten theory \cite{Alim:2020tpw, Alim:2021ukq}.
In this context it is crucial to push the Calabi-Yau fivefold picture
along the lines we have outlined here,
and try to come up with the appropriate quantum mechanical generalization of higher times.

\paragraph{Plan of the paper}
In \cref{s:geometry} we set up the geometric framework for toric K{\"a}hler quotients
and discuss applications of DH formula.
We stress the difference between compact and non-compact cases.
In \cref{s:QM} we provide the quantum mechanical description of toric K{\"a}hler quotients
and define the equivariant partition function, generalizing DH formula.
We discuss the corresponding semi-classical expansion of the partition function,
stressing the difference between compact and non-compact settings.
\Cref{s:shift} presents the shift equation, which is a difference equation for the equivariant volume
associated to the action of cohomology with compact support on de Rham cohomology.
We mostly focus on the case of $H^2$ cohomology.
In \cref{s:examples-2} we go through different examples in real dimension $4$, $6$ and $10$.
In \cref{s:examples-4} we briefly discuss the cases of cohomology with compact support in
higher degree, in two explicit examples.
In \cref{s:higher-t} we introduce the notion of higher times for the equivariant volume.
We explain how the higher time formalism can be used to write the shift equations
corresponding to higher-degree cohomology with compact support.
In \cref{s:proof} we illustrate the use of higher times for a specific Calabi-Yau fivefold,
and apply our techniques to a physically relevant class of problems, namely
the factorization of classical actions in the context of
non-abelian Donaldson-Thomas theory on Calabi-Yau threefolds.
This clarifies and extends the considerations of ref.~\cite{Zotto:2021xah}.

An appendix by M.~Vergne provides an alternative local proof of the main shift equation.

\paragraph{Acknowledgements}

NN thanks K.~Becker, M.~Dedushenko, J.~Kaidi and A.~Losev for discussions.
NP thanks L.~Cassia and Y.~Tachikawa for discussions.
We thank M.~Vergne for carefully reading the manuscript
and providing an alternative proof of some of the results.
NN is grateful to the IHES for hospitality in July 2021, when part of this work was done.
The work of NP and MZ is supported by the grant ``Geometry and Physics''
from the Knut and Alice Wallenberg foundation.
Opinions and conclusions expressed here are those of the authors
and do not necessarily reflect the views of funding agencies.

\section{Geometric setup}
\label{s:geometry}

We start with the complex vector space ${\BC}^N$ with complex coordinates
${\bz} = \left( z^i \right)_{i=1}^{N}$
and constant K{\"a}hler form in the coordinates $z, {\zb}$:
\be
{\omega}_{0} = \frac{1}{2\ii} \sum_{i=1}^N dz^{i} \wedge d{\zb}^{i}
\ee
Let ${\BT} = U(1)^{r}$ denote an $r$-dimensional compact torus,
and ${\BT}_{\BC} = \left( {\BC}^{\times} \right)^r$ its complexification.

A unitary ${\BT}$-action on $\BC^N$ is determined by an integral map
$Q: {\BZ}^{r} \to {\BZ}^{N}$,
with integer-valued matrix $(Q^a_i)$ for $a = 1, \ldots, r$, and $i = 1, \ldots , N$,
the charge matrix in the language of gauged linear sigma models:
\be
{\bz} \mapsto \left( e^{\ii \vt_{1}}, \ldots , e^{\ii {\vt}_{r}} \right) \cdot {\bz}
= \left( e^{\ii \, Q_{i}^{a} {\vt}_{a}}  z^{i} \right)_{i=1}^{N}
\label{eq:torusac}
\ee
where we sum over repeated indices $a$.
The same $Q$ defines the holomorphic action of ${\BT}_{\BC}$ on ${\BC}^{N}$
-- simply let ${\vt}_k$'s in \cref{eq:torusac} be complex.
We require that
\be
\gcd_i Q_i^a=1 \quad \text{for every } a
\label{eq:nodiv}
\ee

Denote by ${\bp} = \left( p^i = |z^i|^2 \right)_{i=1}^{N}$ the map
${\BC}^{N} \to {\BR}_{+}^{N}$.
The $\BT$-action \cref{eq:torusac} defines the moment map
${\mu}: {\BC}^{N} \to {\BR}^{r} = \left( \mathrm{Lie} {\BT} \right)^{*}$ by
\be
{\mu} = Q^\text{tr} \cdot {\bp} = \left( \mu^a \right)_{a=1}^{r}\, , \qquad
{\mu}^{a}({\bz}, {\bar\bz} )  = \sum_{i=1}^{N} Q_{i}^{a} p^i
\label{def-moment-map}
\ee
Fix an $r$-tuple ${\bt} = \left( t^{a} \right)_{a=1}^{r}$ of real numbers $t^a$,
then the symplectic quotient is defined as
\be
X_{\bt} = \mu^{-1}({\bt})/{\BT}
\label{eq:xt}
\ee
For generic $\bt$ with \cref{eq:nodiv} in place,
$X_{\bt}$ is a smooth symplectic manifold of dimension $2(N-r)$,
with the symplectic form ${\varpi}_{\bt}$ verifying:
\be
p^{*} {\varpi}_{\bt} = i^{*} {\omega}_{0}
\ee
where $p: {\mu}^{-1}({\bt}) \rightarrow \mu^{-1}({\bt})/{\BT}$ is a projection,
and $i: {\mu}^{-1}({\bt}) \longrightarrow {\BC}^{N}$ the embedding.
Moreover $X_{\bt}$ inherits a complex structure from the identification
\be
X_{\bt} = \left( {\BC}^{N} \right)^\text{stable} / \left( {\BC}^{\times} \right)^{r}
\ee
where the notion of stability is discussed below.
Together, the complex and symplectic structures make $X_{\bt}$
a K{\"a}hler manifold of complex dimension $d = N-r$,
and $r$-dimensional Picard variety, so that $\dim H_{2} ( X_{\bt} ) = r$.
For two closeby values $\bt \approx \bt'$, the corresponding manifolds
$X_{\bt}$ and $X_{\bt'}$ are diffeomorphic,
so we can skip the subscript $\bt$, if we are only interested in the manifold structure.

The symplectic forms ${\varpi}_{\bt}$ on $X_{\bt}$ and ${\varpi}_{\bt'}$ on $X_{\bt'}$,
upon identification of the manifolds, can be compared.
The theorem of Duistermaat and Heckman states that the cohomology class $[{\varpi}_{\bt}]$
depends on $\bt$ linearly, as long as one does not cross critical values of $\mu$.

An important combination of the charges
\be
Q^{a} : = \sum_{i=1}^N Q^a_i
\label{eq:volweight}
\ee
measure the degree of non-invariance of the holomorphic top-degree form on $\BC^N$.
If
\be
Q^{a} = 0
\ee
for all $a$, then $X_{\bt}$ is also a Calabi-Yau (CY) $d$-fold, necessarily non-compact.
Specifically, it means two things:
\begin{enumerate}
\item
$X_{\bt}$ inherits a holomorphic $(d,0)$-form from the top-degree holomorphic form on $\BC^N$:
\be
{\Omega} = \iota_{{\CalV}^{1}} \cdots \iota_{{\CalV}^{r}} dz^{1} \wedge \cdots \wedge dz^{N}
\ee
where
\be
{\CalV}^{a} = \sum_{i=1}^{N} Q^{a}_{i} z^{i} \frac{\pa}{\pa z^i}
\ee
are the holomorphic vector fields on $\BC^N$, generating the $\BT_{\BC}$-action.
\item
there exists a smooth function $f: X_{\bt} \to {\BR}$, such that the modified K{\"a}hler form:
\be
{\omega} = {\varpi}_{\bt} + {\pa}{\bar\pa}f
\ee
defines a Ricci-flat metric:
\be
{\omega} \wedge \ldots \wedge {\omega} \propto {\Omega} \wedge {\bar\Omega}
\ee
with some constant prefactor (which may depend on complex moduli).
\end{enumerate}
Although the Calabi-Yau condition is important for supersymmetry considerations,
most of our discussion below goes through without it.

Choose a one-parameter subgroup of $\left( {\BC}^{\times} \right)^{N} / \BT_{\BC}$,
acting nontrivially on $X_{\bt}$.
It defines a Hamiltonian $H_{\e}: X_{\bt} \to {\BC}$
\be
H_{\e} = \sum_{i=1}^{N} {\e}_{i} p^{i}
\ee
where the \emph{equivariant parameters} ${\e}_{i}$ are defined ambiguously,
up to the shifts ${\e}_{i} \sim {\e}_{i} + \sum_{a} Q^{a}_{i}{\lambda}_{a}$
with arbitrary ${\lambda}_{a} \in {\BC}$.
Such transformations, given \cref{eq:xt}, only shift the Hamiltonian $H_{\e}$ by a constant.
We define the equivariant volume of $X_{\bt}$ by
\be
\cf_{\e} ({\bt}) = \int_{X_{\bt}} e^{{\varpi}_{\bt} + H_{\e}}
\label{eq:eqvol}
\ee
assuming the integral converges.
It can be computed by the integral over the product of ${\BC}^{N}$ and
the Lie algebra of the torus $U(1)^r$
\be
\cf_{\e} ({\bt})
= \int_{\BC^N} \, d{\bz}d{\bar\bz}\,
\int_{\BR^r} \prod_{a=1}^r \frac{d\phi_a}{2\pi} \,
\exp -\left[ {\e}_i p^i + \ii \phi_a ( Q_i^a p^i - t^a)\, \right]
\label{naive-F}
\ee
where the last factor on the right hand side of \cref{naive-F} comes
from the integral representation of the delta-function:
\be
\delta (x) = \frac{1}{2\pi} \int_{-\infty}^{+\infty} d\phi~e^{-\ii x\phi}
\ee
Integrating out the $\bz$ variables in \cref{naive-F}, we arrive at the contour integral
\be
\cf_{\e} ({\bt})
= \int_{(\ii\BR)^r} \prod_{a=1}^{r} \frac{d{\phi}_{a}}{2\pi {\ii}} \,
\frac{e^{ t^{a} {\phi}_{a}}}
{\prod_{i=1}^{N} \left( {\e}_{i} + Q_{i}^{a} {\phi}_{a} \right)}
\ee
and if we close the contour $\left( {\ii}{\BR}^{r} \right) \to C$
so that the integral is convergent for our choice of $\bt$,
then we can calculate it by residues.

In simple cases one can determine the contour $C$ explicitly
by requiring convergence of the above integral.
However, with more integrations this becomes increasingly complicated and one uses
the Jeffrey-Kirwan (JK) prescription for choosing the relevant poles
\cite{MR1318878, MR1422429}.
We do not review here the general JK prescription for picking the poles,
but we present some explicit examples below.
Assuming the appropriate contour $C$ with JK prescription,
we define the equivariant volume of $X$ by\footnote
{When there's no ambiguity, we drop the dependence on the geometry $X$,
by writing $\cf$ instead of $\cf^X$.
We denote its dependence on equivariant parameters $\e_i$ by the subscript index.}
\be
\cf_\e (\bt) := \oint_{C} \prod_{a=1}^{r} \frac{d{\phi}_{a}}{2\pi {\ii}} \,
\frac{e^{ t^{a} {\phi}_{a}}}
{\prod_{i=1}^{N} \left( {\e}_{i} + Q_{i}^{a} {\phi}_{a} \right)}
\label{cont-F-def}
\ee
This satisfies the property
\be
\cf_{\e + Q^a\lambda_a} (\bt) = e^{-t^a \lambda_a} \cf_\e (\bt)
\label{gauge-sym-eps}
\ee
which may allow to set some $\e$'s (or their combinations) to zero,
and the scaling property
\be
\cf_{\lambda \e} (\lambda^{-1} \bt) = \lambda^{-d} \cf_\e (\bt)
\label{scaling-F-cl}
\ee
If $X$ is compact, then the above integral is regular
in the equivariant parameters $\e$'s around $\e_i=0$ for all $i=1,\ldots,N$
(i.e., all $\e$'s can be set to zero)
and at zeroth order in $\e$'s it gives the volume of Delzant polytope
(up to irrelevant numerical factors),
or equivalently the symplectic volume of $X_{\bt}$.
Thus for compact $X$, we have
\be
\cf_\e (\bt) = \pp_d (\bt) + O (\e)
\label{compact-classical}
\ee
with $\pp_d (\bt)$ a homogeneous polynomial of degree $d$.
This polynomial has a clear topological meaning as it corresponds (see \cref{naive-F}) to
the intersection polynomial on $H^2(X,\BR)$, with $\bt$ being coordinates on $H^2$.
Thus, \cref{cont-F-def} is an efficient way to calculate it.

If $X$ is non-compact, which is the case we are mainly interested in,
then $\cf_\e(\bt)$ has poles in $\e$'s.
The scaling property \cref{scaling-F-cl} together with the definition \cref{cont-F-def} gives
\be
\cf_\e (\bt) = \frac{1}{R_d(\e)} \left( 1 + \sum_{i=1}^\infty P_i (\bt, \e) \right)
\label{class-exp-F}
\ee
where $R_d$ and $P_i$ are rational functions in $\e$ and polynomials in $\bt$,
satisfying
\be
R_d (\lambda \e) = \lambda^d R_d(\e)~,\quad
P_i ( \bt, \lambda \e ) = \lambda^i P_i(\bt, \e)~,\quad
P_i (\lambda^{-1} \bt, \lambda \e ) = P_i (\bt, \e)
\ee
In the literature the main focus was on the most singular term $R^{-1}_d(\e)$,
which is sometimes called equivariant volume.
The less singular terms $P_i(\bt,\e)R^{-1}_d(\e)$ have not been studied systematically.
For example, an important question is whether one can extract from $P_d(\bt,\e)R ^{-1}_d(\e)$
a polynomial in $\bt$ with clear geometric meaning, as for the compact case.
Since we cannot set $\e=0$,
there is no natural way to convert $P_d(\bt,\e)R^{-1}_d(\e)$ to a polynomial in $\bt$.
In some special cases, upon a clever choice of $\e$'s (using \cref{gauge-sym-eps}),
$P_d(\bt,\e)R^{-1}_d(\e)$ collapses to a polynomial in $\bt$ only.
In other examples this does not work,
and there is always a dependence on $\e$'s in $P_d(\bt,\e)R^{-1}_d(\e)$.
It is a goal of this work to explain the structure of $P_d(\bt,\e)R^{-1}_d(\e)$
and their geometric meaning.

\section{Quantum mechanical setup}
\label{s:QM}

As we said above, the K{\"a}hler quotient $X = {\BC}^N// {\BT}$
has another interesting description.
On $\BC^N$, consider the ${\lambda} \in {\BT}_{\BC}$ action
${\bz} \mapsto {\lambda}\cdot {\bz} :=
\left(  z^i \prod_a \lambda_a^{Q_i^a} \right)_{i=1}^{N}$,
with the same charge matrix $Q$ as before.
Given a \emph{polarization vector} ${\bT} \in {\BZ}^{r}$, let us define
the open subset $\left( {\BC}^{N} \right)^\text{stable}\subset \BC^N$ of $\bT$-stable points:
a point ${\bz}_{*}$ is \emph{stable} if there exists a non-constant polynomial $\Psi({\bz})$
obeying the \emph{Gauss law}:
\be
\Psi( {\lambda} {\cdot} {\bz} ) \, = \, {\Psi} ( {\bz} ) \prod_{a=1}^{r} \lambda_{a}^{T^{a}}
\label{Gauss-law}
\ee
which does not vanish at ${\bz}_{*}$:  $\Psi( {\bz}_{*}) \neq 0$.
Then, the theorem states:
\be
X_{\bt} \, = \, \left( {\BC}^{N} \right)^\text{stable} /{\BT}_{\BC}
\approx {\mu}^{-1} ({\bt})/ {\BT}
\label{eq:git}
\ee
where ${\bt} = \hbar \bT$ for any positive $\hbar \in {\BR}_{+}$.
The symplectic structure depends on $\hbar$, while the complex structure does not.

More precisely, for integers $n^i \geq 0$ satisfying $Q^a_i n^i= T^a$, consider monomials
\be
\Psi_{\bn} ( {\bz}) = \prod_i (z^i)^{n^i}
\label{monomial-def}
\ee
If we choose $t^a = \hbar T^a$, then
$X = {\mu}^{-1}({\bt})/ {\BT}$ gets equipped with
an integral K{\"a}hler form ${\varpi}_{\bt}/\hbar$.
We denote by $\mathcal{L}_{\bT}$ the associated line bundle on $X$,
the monomials in \cref{monomial-def} are its holomorphic sections.
They give rise to a basis that is diagonal under the torus
$H = U(1)^{N}/{\BT} \approx U(1)^{N-r}$-action on $X_{\bt}$.
One can define the norm $||\Psi(z)||^2$ on $X$ by choosing
an appropriate Hermitian structure on $\mathcal{L}_{\bT}$ (see \cref{app} for details)
and construct the Hilbert space $\mathcal{H}_{\bT}$.

Define the equivariant partition function (twisted index, or a character) as
\be
Z_q (\bT) = \operatorname{tr}_{{\CalH}_{\bT}} q
= \sum_{{\bn} \in {\BZ}_{+}^{N}, \ Q \cdot {\bn} = {\bT}} \prod_i q_{i}^{n^i}
\label{def-PF-formal}
\ee
which computes the $H$-character of the space of holomorphic sections of $\mathcal{L}_{\bT}$.
It defines a meromorphic function of $q \in H_{\BC}$
(the sum in \cref{def-PF-formal} has, typically, only a finite convergence radius).
This corresponds to counting integer points inside a polyhedron.\footnote
{There exists a better basis for its description; we present explicit examples below.}
For compact $X$, $Z_q (\bT)$ is a Laurent polynomial in $q$, so we set $q=1$,
giving us the dimension $Z_1 (\bT)= \dim {\CalH}_{\bT}$ of the associated Hilbert space.

Using the representation of periodic delta function,
we can rewrite \cref{def-PF-formal} as
\be
Z_q (\bT) = \int_{[-\ii\pi,\ii\pi]^r} \prod_{a=1}^{r}
\frac{d({\phi}_{a}\hbar)}{2\pi {\ii}} \,
\frac{e^{T^a\phi_a \hbar}}
{\prod_{i=1}^N \left( 1 - q_i e^{-\hbar Q^a_i \phi_a} \right)}
\label{quantum-gen-Z}
\ee
and to calculate this integral we deform the contour of integration appropriately.
The logic is similar to the one in previous section
and we apply the JK prescription to select the poles.
The crucial fact is that the contours of integration
in \cref{cont-F-def} and its quantum version \cref{quantum-gen-Z} are the same.
It is natural to rewrite the above integral as
\be
Z_q (\bT) = (-1)^r
\oint_\Gamma \prod_{a=1}^r \frac{dw_a}{2 \pi \ii w_a} \,
\frac{\prod_{a=1}^r w_a^{-T^a}}
{\prod_{i=1}^N \left( 1-q_i \prod_{a=1}^r w_a^{Q_i^a} \right)}
\label{contour-def-Z}
\ee
with an appropriate choice of contour $\Gamma$ and $w_a = e^{-\hbar \phi_a}$.
Similarly to \cref{gauge-sym-eps}, we have
\be
Z_{e^{Q_i^a \lambda_a} q_i} (\bT) = e^{T^a \lambda_a} Z_q (\bT)
\ee
We can rewrite \cref{quantum-gen-Z} as
\be
Z_q (\bT) = \frac{1}{\hbar^{d}} \oint_C \prod_{a=1}^r
\frac {d\phi_a}{2\pi \ii} \, \frac{e^{t^a \phi_a}}{\prod_{i=1}^N (\e_i + Q_i^a \phi_a)}
\prod_{i=1}^N G \left( \hbar(\e_i + Q_i^a \phi_a) \right)
\label{TD-class-contour}
\ee
where $q_i = e^{-\hbar \e_i}$.
If we set $x_i = \e_i + Q_i^a \phi_a$, the function
\be
G(\hbar x) =
\frac{\hbar x}{1-e^{-\hbar x}} = 1 + \frac{\hbar x}{2} + \frac{(\hbar x)^2}{12}
- \frac{(\hbar x)^4}{720} + \cdots
\ee
in \cref{TD-class-contour} corresponds to the equivariant Todd class of $TX$,
\be
\prod_i G (\hbar x_i) =
1 + \hbar \frac{c_1}{2} + \hbar^2 \frac{c_2+c_1^2}{12}+ \hbar^3 \frac{c_1 c_2}{24}
+ \hbar^4 \frac{-c_1^4 + 4c_1^2 c_2 + c_1 c_3 + 3 c_2^2 -c_4}{720} + \cdots
\label{Todd-exp}
\ee
where $c_i$ are the elementary symmetric functions of variables $x_i$, namely
\be
c_1 = \sum_i x_i, \quad
c_2 = \sum_{i<j} x_i x_j, \quad
c_3 = \sum_{i<j<k} x_i x_j x_k
\ee
which correspond to Chern classes.
We obtain the semi-classical expansion \cite{VERGNE2017563}
\be
Z_q (\bT) = \frac{1}{\hbar^{d}} \int_X
e^{\omega + H} \Big( 1 + \hbar \frac{c_1}{2} + \hbar^2 \frac{c_2 + c_1^2}{12}
+ O({\hbar}^3) \Big)
\ee
which should be understood equivariantly on non-compact manifolds.
If $X$ is compact, then one can evaluate the integral in \cref{TD-class-contour}
and set $\e$'s to zero.
Since there are no singular terms in $\e$'s,
we keep only relevant terms in \cref{TD-class-contour}
\begin{multline}
Z_1 (\bT) = \frac{1}{\hbar^d} \int_C \prod_{a=1}^r
\frac{d\phi_a}{2\pi \ii} \, \frac{1}{\prod_{i=1}^N (\e_i + Q_i^a \phi_a)}
\left[ \frac{(\phi\cdot t)^d}{d!}
+\frac{\hbar}2 c_1 \frac{(\phi\cdot t)^{d-1}}{(d-1)!} \right. \\
\left. \quad +\frac{\hbar^2}{12} (c_1^2+c_2) \frac{(\phi\cdot t)^{d-2}}{(d-2)!}
+\cdots \right]
\end{multline}
where lower powers of $\phi\cdot t$ do not contribute due to compactness.
The expansion above is in powers of $T^a = t^a/\hbar$, and $\hbar$ can be absorbed there.
Thus the semi-classical expansion of $Z_q (\bT)$ can be understood
as an expansion in powers of $\bT$, and large $\bT$ is the same as $\hbar^{-1}$.
So
\be
Z_1 (\bT) = \sum_{i=1}^d \mathrm{Pol}_i (\bT)
\ee
with the homogeneous polynomials satisfying
$\mathrm{Pol}_i (\lambda \bT) = \lambda^i \mathrm{Pol}_i (\bT)$, and
\be
\mathrm{Pol}_d (\bT) = \pp_d (\bT) = \cf_0 (\bT)
\ee
defined in \cref{compact-classical}.
This interpretation holds only for compact $X$,
since $\bT$ control the size of the finite dimensional Hilbert space.
This is well-known for compact spaces
(see ref.~\cite{MR2334206} and references therein).
For non-compact $X$, this expansion does not make sense,
since the Hilbert space is infinite-dimensional,
and we need full equivariance to make sense of $Z_q (\bT)$.

To get the index of Dirac operator,
in \cref{TD-class-contour} we use
\be
G(\hbar x) = \frac{\hbar x}{e^{\hbar x/2} - e^{-\hbar x/2}}
\ee
which corresponds to $\hat{A}$ genus.
The two partition functions are related by (cf. \cref{eq:volweight})
\be
Z_q^{\hat{A}} (\bT) =
\left( \prod_i q_i \right)^{1/2} Z_q^\text{Todd}
(\bT - \frac12 ( Q^{1}, \ldots , Q^{r}))
\ee
In what follows we only discuss partition functions built out of the Todd class.

\section{Shift equations from compact support cohomology}
\label{s:shift}

For a non-compact manifold $X$, of real dimension $2d$,
let us denote by $H^k(X)$ the standard de Rham $k$-th cohomology
and by $H^k_\text{comp}(X)$ the $k$-th cohomology with compact support.
We have the dualities $H^k(X) \approx H^{2d-k}_\text{comp} (X) \approx H_k(X)$,
where $H_k(X)$ is the $k$-th homology group of $X$.
The cohomology with compact support $H_\text{comp}^k (X)$ acts on $H^k(X)$
\be
H^k (X) \times H_\text{comp}^{k} (X) \rightarrow H^k (X)
\label{general-shift-coh}
\ee
by the shifts: $(\omega, \omega_\text{comp}) \mapsto \omega+\omega_\text{comp}$.
Moreover there exists a natural map
\be
 H_\text{comp}^{k} (X)\rightarrow H^k(X)
 \label{comp-to-derham}
\ee
which is defined via Poincar\'e duality for compact cycles.
For details see the book \cite{MR658304}.

Let us specialize to toric non-compact $X$.
We assume that $H^2_\text{comp}(X)$ is non-empty
(for the empty case, see \cref{s:examples-4}).
We want to describe the maps \cref{general-shift-coh,comp-to-derham} more concretely.
On non-compact $X$ with K{\"a}hler form $\omega$, we choose a basis for $H_2(X)$
\be
t^a = \int_{C^a} \omega
\ee
where $C^a$ is a basis in $H_2(X,\BZ)$.
For the actual K\"ahler form $\omega$, the $\bt$ should satisfy certain inequalities.
Let us denote by $D_i$ the toric divisors and
by $\operatorname{PD} D_i$ their Poincar\'e dual classes,
represented by closed two-forms.
We can write the charge matrix as
\be
\int_{C^a} \operatorname{PD} D_i = Q_i^a
\ee
Let us denote by $I$ those $i$ corresponding to compact toric divisors.\footnote
{Explicitly,
$\sum_I m_I \e^I = m_I \e^I = \sum_{i | \, D_i \text{ compact}} m_i \e^i$.}
The Poincar\'e duals $\omega_I=\operatorname{PD} D_I$ of compact toric divisors $D_I$
generate $H^2_\text{comp}$.
Introduce $m^I$ as expansion coefficients for a compactly supported two-form
$\omega_\text{comp} = - m^I \omega_I$
(here the minus is for later convenience), so
\be
\int_{C^a} (\omega - m^I \omega_I) = t^a - Q_I^a m^I
\ee
describes the map $H^2 \times H_\text{comp}^2 \rightarrow H^2$.
We call it the $\psi$-map, $\bt \mapsto \bt+\psi\cdot m$, given by
\be
\psi = - (Q_I^a): H^2_\text{comp}(X) \rightarrow H^2 (X)
\ee
which is a realization of the natural map \cref{comp-to-derham} for second cohomology groups.

\subsection{Shift equation}
\label{ss:shift-der}

Our goal is to understand how the equivariant volume $\cf_\e (\bt)$ defined in \cref{cont-F-def}
behaves under the action of compact support cohomology.
The main statement of this paper is the following \emph{shift equation}
\be
\cf_\e (\bt) - e^{-m \cdot \e} \cf_\e (\bt + \psi \cdot m) = \wp_d (\bt,m) +  O (\e)
= \text{regular in } \e
\label{general-shift-class}
\ee
where $m$'s correspond to compact divisors and we denote $m \cdot \e = \sum_I m^I \e_I$,
where the sum runs over compact divisors.
The non-trivial content of this equation is that RHS is regular in $\e$'s,
and at zeroth order in $\e$'s we have a homogeneous polynomial $\wp_d (\bt,m)$ of degree $d$,
$\wp_d(\lambda \bt, \lambda m)= \lambda^d \wp_d(\bt,m)$.
The definition of $\cf_\e(\bt)$ requires the specification of a chamber in $\bt$,
thus if we require that $\bt + \psi \cdot m$ are in the same chamber,
then $\wp_d (\bt,m)$ is a concrete homogeneous polynomial.
If we change chamber, then this polynomial may change as well.
\Cref{general-shift-class} imposes constraints on the expansion \cref{class-exp-F}:
the terms $P_i (\bt,\e) R^{-1}_d(\e)$ for $i < d$ depend on
specific combinations of $\bt$ ($\dim H^2 - \dim H_\text{comp}^2$), as we explain below.

To prove \cref{general-shift-class}, consider a compact toric divisor $D_I$,
defined by the equation $p^{I} = 0$, in addition to the general moment map equations.
By definition of compact divisor, the space of $p$'s satisfying these equations is compact.
Equivalently, if we remove the $I$-th column from the charge matrix $(Q_{i}^{a})$,
then we obtain the charge matrix of a compact space.
Consider
\be
\cf_\e (\bt) - e^{-m^I \e_I} \cf_\e (\bt + \psi \cdot m) =
\oint_{C} \prod_{a=1}^{r} \frac{d{\phi}_{a}}{2\pi {\ii}} \,
\frac{e^{ t^{a} {\phi}_{a}}}
{\prod_{i=1}^{N} \left( {\e}_{i} + Q_{i}^{a} {\phi}_{a} \right)}
{(1 - e^{-m^I (\e_I + Q_I^a \phi_a)})}
\label{proof-general-shift}
\ee
and observe that
\be
 \frac{1 - e^{-m^I (\e_I + Q_I^a \phi_a)}}
{\prod_{i=1}^{N} \left( {\e}_{i} + Q_{i}^{a} {\phi}_{a} \right)} =
\sum_I \frac{\cdots}{\prod_{i=1, \, i\neq I}^{N} \left( {\e}_{i} + Q_{i}^{a} {\phi}_{a} \right)}
\ee
where dots contain (infinitely many) polynomial terms in $\phi$, $\e$, and $m$
(starting from the linear term in $m$),
while at denominator we have the $Q$-matrix of a \emph{compact} space.
We conclude that \cref{proof-general-shift} is regular in $\e$'s
and its zeroth order corresponds to a homogeneous polynomial.

Let us elaborate on the geometry behind our manipulations.
We define correlators for toric divisors $(D_{j_1}, D_{j_2},\ldots, D_{j_k})$
\be
\langle D_{j_1} D_{j_2} \cdots D_{j_k} \rangle_\e =
\prod_{s=1}^k \left( \e_{i_s} + Q_{i_s}^{b} \frac{\partial}{\partial t^b} \right) \cf_\e (\bt)
\ee
These correlators correspond to equivariant volumes of divisors, their intersections, etc.
For example, for one divisor we have
\be
 \langle D_i \rangle_\e = \operatorname{vol}_\e (D_i)
\ee
and for two divisors
\be
 \langle D_i D_j \rangle_\e = \operatorname{vol}_\e (D_i \cap D_j)
\ee
These correlators contain singular terms in $\e$
unless at least one of the divisors is compact:
\be
 \langle D_{I} D_{j_2} \cdots D_{j_k} \rangle_\e = \text{regular in } \e
\ee
where $D_I$ is a compact divisor.
Thus we can calculate intersection numbers from DH theorem
only when at least one of the divisors is compact
\be
 \langle D_{I} D_{j_2} \cdots D_{j_d} \rangle_\e = c_{I j_2 \ldots j_d} + O (\e)
\ee
where $d=\dim_{\BC} X$.
The analogous object with non-compact divisors only always contains singular terms in $\e$
and there is no natural way to associate intersection numbers to these divisors within this framework.
Using this terminology the shift \cref{proof-general-shift} becomes
\be
\langle 1 - e^{-m^I D_I} \rangle_\e = \wp_d (\bt,m) +  O (\e)
\ee
where $I$ runs over all compact divisors.
Thus the polynomial $\wp_d (t,m)$ has the interpretation
\be
\wp_d (\bt,m) = m^I \operatorname{vol}_0 (D_I)
- \frac{1}{2} m^I m^J \operatorname{vol}_0 (D_I \cap D_J ) + \cdots
\ee
We instead keep all equivariant parameters in place and avoid ad hoc choices,
e.g.~postulating a particular answer for the intersection number of non-compact divisors.
As we explain below, we can define intersection numbers
for all divisors upon certain non-canonical choices.

\Cref{general-shift-class} has a quantum analog, which can be proven in a similar fashion.
Take integer $T$'s, which belong to some chamber,
and define $Z_q (\bT)$ as in \cref{contour-def-Z}.
If we take the collection of integers $M^I$ (with $I$ being a label for compact divisors),
then we have the following quantum shift equation
\be
Z_q (\bT)- Z_q (\bT + \psi\cdot M) \prod_I q_I^{M^I} = \cp (\bT,M;q) = \text{a polynomial in } q
\label{general-quantum-shift}
\ee
where we assume that $\bT + \psi\cdot M$ are in the same chamber as $\bT$,
in order for $\cp (\bT,M;q)$ to be a well-defined polynomial.
It is crucial that ${\cp}(T,M;q)$ is an integer-valued polynomial,
namely it belongs to $\BZ[q_1,\ldots,q_N]$ for any admissible choice of $\bT,M$.
In examples below we illustrate some chamber structure issues for $\cp$.
Since the pole structure of $\cf_\e(\bt)$ and $Z_q (\bT)$ are the same,
the proof of the quantum shift equation is analogous to the classical case.

Let us discuss the semi-classical expansion of the quantum shift equation.
Since $\cp (\bT,M;q)$ is a polynomial in $q$'s, we can set $q=1$
and obtain another polynomial in $\bT$'s and $M$'s
\be
\cp (\bT,M;1) = \sum_{i=1}^d \cp_i(\bT,M)
\ee
where ${\cp}_i(\lambda \bT,\lambda M) = \lambda^i {\cp}_i (\bT,M)$.
The leading term corresponds to
\be
\cp_d (\bT,M) = \wp_d (\bT,M) = \hbar^{-d} \wp_d (\bt,m)
\ee
with $\wp_d$ being the classical polynomial in \cref{general-shift-class}
and $\bt=\hbar \bT$, $m=\hbar M$.
In general, we have the semi-classical expansion
\be
\cp (\bT,M;1) = \hbar^{-d} \wp_d (\bt,m) +
\hbar^{-d+1} \cp_{d-1} (\bt,m) + \hbar^{-d+2} \cp_{d-2} (\bt,m)+ \cdots
\label{semic-P}
\ee
where the correction $\cp_{d-1} (\bt,m)$ is controlled by $c_1$,
$\cp_{d-2} (\bt,m)$ by $c_1^2 + c_2$ etc.
Thus we can also write the shift equation order by order in $\hbar$.

Let us conclude with an observation.
If we pick a compact divisor $CD$ with equivariant parameter $q_{CD}$,
then formally we can write the partition function as a single infinite sum
\be
Z_q (\bT) = \sum_{n=0}^\infty q_{CD}^n Z^{CD}_q (\bT + \psi\cdot n)
\label{general-exp-CD}
\ee
with $Z^{CD}_q (\bT)$ the partition function for a compact divisor (a finite sum).
As $\bT+\psi \cdot n$ may be outside of the chamber for $\bT$,
we have to replace $Z^{CD}_q (\bT+\psi \cdot n)$ with different polynomials
for different parts of the sum over $n$.
Although it is hard to manipulate, \cref{general-exp-CD} provides
a nice physical intuition behind our quantum shift equation.

\subsection{Solving the shift equation}
\label{ss:solving-shift}

Let us discuss how to solve \cref{general-shift-class}.
From one side, this is an algebraic equation, but we have to keep in mind
the chamber structure in $\bt$.
Let us assume that $\bt$ and $\bt+\psi \cdot m$ belong to the same chamber
(they can also lie on the boundary of the chamber).

Let us start from algebraic aspects.
We identified the map $\psi: \BR^b \to \BR^r$ with certain components of the charge matrix,
where $b = \dim H_\text{comp}^2 (X)$ is the number of compact toric divisors.
Let's choose a basis $t^\al=O^\al_a t^a$, with $\al=1,\ldots, r-b$
for $\operatorname{coker} \psi$,
and a basis $t^I$, with $I=1,\ldots,b$ for $\operatorname{im} \psi$,
where the matrix $\psi$ is invertible.
This choice is not canonical.
If we choose $m$'s such that
\be
Q_I^J m^I = t^J
\ee
then the second term in LHS of \cref{general-shift-class} becomes
\be
e^{-m \cdot \e} \cf_\e (\bt + \psi \cdot m) = e^{-\e_J (Q^{-1})_I^J t^I} \cf_\e (t_\al,0)
\ee
For the sake of clarity, using \cref{gauge-sym-eps} we can set $\e_I=0$ and get
\be
\cf_\e (\bt) = \cf_\e (t^\al,0) + \wp_d (t^\al,(Q^{-1})_J^I t^J) + O (\e)
\ee
This is a solution to the shift equation, which depends on the choice of bases above.
As far as algebra considerations, all bases are equivalent.
However, if we impose a chamber structure, this is not true anymore.
The above solution is valid if the point $(t^\al, 0)=(O^\al_a t^a,0)$ lays
in the same chamber (or its boundary) as the original $\bt$.

Another way to interpret this solution is to say that we compare
the equivariant volumes of two spaces:
one $\cf_\e(\bt)$ for the regular $X$ and
another $\cf_\e (t^\al,0)$ for the singular relative of $X$,
where we degenerate some $\bt$ in a particular way
(staying in the same chamber or its boundary).
By comparing these two equivariant volumes, we obtain the regular term
\be
 \cf_\e (\bt) - \cf_\e (t^\al,0) =  \wp_d (t^\al,(Q^{-1})_J^I t^J) +  O(\e)
\ee
and we can interpret the polynomial $\wp_d$
as a generalized intersection for non-compact spaces,
which depends on the choices above.
In this representation we see that the singular terms in $\cf_\e (\bt)$ depend only on $t^\al$.

\section{Examples with nontrivial \texorpdfstring{$H^2$}{H2} with compact support}
\label{s:examples-2}

We present explicit examples of shift equations for cases
with non-empty $H^2_\text{comp}(X)$.
These examples illustrate and clarify our general discussion.
We try to be explicit and emphasize the different possible relations between
$H^2(X)$ and $H_\text{comp}^2(X)$.
To avoid cluttering notation, in examples we use lower indices for both $t$ and $p$.

\subsection{\texorpdfstring{$A_1$-space}{A1}}

We realize the total space $X= \operatorname{Tot} ({\co}(-2) \to \BP^1)$
as the quotient of $\BC^3$ by $U(1)$ with
\be
Q =
\begin{pmatrix}
1 & -2 & 1
\end{pmatrix}
\ee
According to \cref{def-moment-map}, momenta satisfy
\be
 p_1 - 2 p_2 + p_3 = t
\ee
This space has one-dimensional $H^2$ and one-dimensional $H^2_\text{comp}$,
since it has one compact divisor,
given by $X \cap \{p_2=0\}$.
The $\psi$-map is non-degenerate in this case.
The equivariant volume is
\be
\cf_\e (t) = \oint_C \frac{d\phi}{2\pi \ii} \,
\frac{e^{\phi t}}{(\e_1 + \phi)(\e_2 - 2 \phi) (\e_3 + \phi)}
\ee
where only poles at $-\e_1$ and $-\e_3$ contribute.
For this space \cref{general-shift-class} becomes
\be
   \cf_\e (t) - e^{-m \e_2} \cf_\e (t+ 2m) = mt + m^2 + O(\e)
   \label{A1-classical-shift}
\ee
and since there is no kernel for $\psi$-map the solution is
\be
   \cf_\e (t) = e^{t \e_2/2} \cf_\e (0) - \frac{1}{4} t^2 + O(\e)
\ee
The partition function
\be
Z_q (T) = \sum_{s_1,s_2,s_3 \geq 0,~ s_1-2s_2+s_3 = T} q_1^{s_1} q_2^{s_2} q_3^{s_3}
\ee
satisfies the quantum shift equation \cref{general-quantum-shift}
\be
Z_q (T)- q_2^M Z_q(T+2M) = \cp(T,M;q) = \sum_{l=0}^{M-1} \sum_{s=0}^{T+2l}
q_1^s q_2^l q_3^{T+2l -s}
\ee
where we assume $M$ to be a positive integer.
If we evaluate this polynomial at $q=1$, we get
\be
 \cp (T,M;1) = TM + M^2
\ee
which agrees with the RHS of \cref{A1-classical-shift} (no quantum corrections).

Since $X$ is the total space of a line bundle,
the partition function can be expanded
in terms of compact divisor partition functions
\be
Z_q (T) = \sum_{n=0}^\infty q_2^n \sum_{s=0}^{T+2n} q_1^s q_3^{T+2n-s} =
   \sum_{n=0}^\infty q_2^n Z_{q_1,q_3}^{\BP^1} (T+2n)
\ee
However, such simple expressions are not available in more complicated examples,
for example for the next example of $A_2$-space.

\subsection{\texorpdfstring{$A_2$-space}{A2}}

Consider a four-dimensional $X$,
constructed as the quotient of $\BC^4$ by $U(1)^2$ with
\be
Q =
\begin{pmatrix}
1 & -2 & 1 & 0 \\
0 & 1 & -2 & 1
\end{pmatrix}
\ee
\Cref{def-moment-map} becomes
\be
\begin{aligned}
 p_1 - 2p_2 + p_3 &= t_1 \\
 p_2 - 2p_3 + p_4 &= t_2
\end{aligned}
\ee
This space has a two-dimensional $H^2$ and two-dimensional $H_\text{comp}^2$,
with the two compact toric divisors given by $X \cap \{p_2=0\}$ and $X\cap\{p_3=0\}$.
The equivariant volume is
\be
 \cf_\e (\bt) = \oint_C \frac{d\phi_1 d\phi_2}{(2\pi \ii)^2} \,
 \frac{e^{\phi_1 t_1} e^{\phi_2 t_2}}
 {(\e_1 + \phi_1)(\e_2 - 2 \phi_1 + \phi_2) (\e_3 + \phi_1 - 2\phi_2)(\e_4 + \phi_2)}
\ee
where poles at $(-\e_1, -\e_2-2\e_1)$, $(-\e_1, -\e_4)$
and $(-\e_3-2\e_4, - \e_4)$ contribute.
We get
\begin{multline}
\cf_\e(\bt) =
-\frac { e^{-\e_1 t_1 - (2 \e_1 + \e_2) t_2}} {(3 \e_1 + 2 \e_2 + \e_3) (2 \e_1 + \e_2 - \e_4)}
-\frac { e^{-\e_1 t_1 - \e_4 t_2}} {(\e_1 - \e_3 - 2 \e_4) (2 \e_1 + \e_2 - \e_4)} \\
-\frac { e^{-(\e_3 + 2 \e_4) t_1 - \e_4 t_2}} {(-\e_1 + \e_3 + 2 \e_4) (\e_2 + 2 \e_3 + 3 \e_4)}
\end{multline}
and we can derive the shift equation
\begin{multline}
\cf_\e (t_1,t_2) - e^{-m_1 \e_2 - m_2 \e_3} \cf_\e (t_1 + 2m_1 - m_2, t_2 - m_1 + 2m_2) \\
=  t_1 m_1 + t_2 m_2 + m_1^2 + m_2^2 - m_1 m_2 + O(\e)
\label{shift-A2-classical}
\end{multline}
where we assume the chamber $t_1 \geq 0$, $t_2 \geq 0$
and $t_1 + 2m_1 - m_2 \geq 0$, $t_2 - m_1 + 2m_2 \geq 0$.

In this example the $\psi$-map is non-degenerate and we can solve this shift equation by
\be
\cf_\e (\bt) = e^{(2\e_2+\e_3)t_1/3 + (\e_2 + 2 \e_3)t_2/3} \cf_\e (0,0)
- \frac{1}{3} (t_1^2 + t_1 t_2 + t_2^2) + O(\e)
\ee
if we choose $m_1 = - \frac{1}{3} (t_2 +2 t_1)$, $m_2= - \frac{1}{3} (t_1 + 2 t_2)$,
which is on the boundary of the above chamber.
The partition function
\be
Z_q (\bT)=
\sum_
{s_1, s_2, s_3, s_4 \geq 0, \, s_1 - 2s_2 + s_3 = T_1, \, s_2 - 2 s_3 + s_4 =T_2}
q_1^{s_1} q_2^{s_2} q_3^{s_3} q_4^{s_4}
\ee
satisfies the quantum shift equation
\be
Z_q (\bT) - q_2^{M_1} q_3^{M_2} Z_q (T_1 + 2M_1 - M_2, T_2 -M_1 + 2M_2) = \cp (\bT,M;q)
\ee
with $\cp$ a polynomial in $q$'s,
whose concrete form depends on the values of $\bT$ and $M$'s.
In the region where $T_1+2M_1-M_2 \geq 0$, $T_2+2M_2-M_1 \geq 0$ and $M_1, M_2 \geq 0$, we have
\be
\cp (\bT,M;q)=
q_1^{T_1} q_4^{T_2} \sum_{(s_2,s_3)\in R} (q_1^2 q_2 q_4^{-1})^{s_2} (q_3 q_4^2 q_1^{-1})^{s_3}
\ee
where the region of summation
\be
R = \{ (s_2,s_3) \in \BZ^2 | \, s_2,s_3 \geq 0; \,
s_3-2s_2 \leq T_1; \, s_2-2s_3 \leq T_2; \, s_2 < M_1 \text{ or } s_3 < M_2\}
\ee
has size
\be
\cp (\bT,M;1)= \sum_{(s_2,s_3)\in R} 1 = M_1 T_1+M_1^2+M_2 T_2+M_2^2-M_1M_2
\label{PA2}
\ee
which agrees with the RHS of \cref{shift-A2-classical} without quantum corrections.
If we further assume
$T_2-M_1 + 1 \geq 0$, $T_1+2M_1 - M_2 +1 \geq 0$ and $M_1,M_2\geq 1$, we get
\begin{multline}
\cp (\bT,M;q) = \sum_{s=0}^{M_2-1} \sum_{k=0}^{T_2-M_1 +2s}
q_1^{3s+T_1 +2T_2 - 2k} q_2^{2s+T_2-k} q_3^s q_4^k \\
+ \sum_{s=0}^{M_1-1} \sum_{k=0}^{T_1+2s} q_1^k q_2^s q_3^{2s+T_1 -k} q_4^{3s+ 2T_1 + T_2 - 2k}
\end{multline}
and its evaluation at $q=1$ agrees with \cref{PA2}.

Let us comment on the analog of \cref{general-exp-CD} for this space.
Formally, we can write the expansion in terms of $q_2$
(which controls the compact divisor $p_2=0$)
\be
Z_q (\bT)= \sum_{n=0}^\infty q_2^n Z^{CD}_q (T_1+2n, T_2-n)
\ee
with $Z^{CD}_q$ being a finite sum.
However, $Z^{CD}_q$ is a different function depending on the value of $T_2-n$.
For $T_2 \geq n$, the polynomial $Z^{CD}_q$ is the partition function of a compact divisor.

\subsection{\texorpdfstring{$SU(3)$}{SU(3)} example}
\label{ss:SU(3)}

Consider the Calabi-Yau three-fold given by quotient of $\BC^6$ by $U(1)^3$ with
\be
Q =
\begin{pmatrix}
 1 & 1 & 1 & -3 &  0 & 0 \\
 0 & 0 & 1 & -2 & 1  & 0 \\
 0 & 0 & 0 &  1 & -2 & 1
\end{pmatrix}
\ee
Momenta satisfy
\be
\begin{aligned}
 p_1 + p_2 + p_3 -3 p_4 = t_1\\
 p_3 - 2 p_4 + p_5 =t_2\\
 p_4 - 2p_5 + p_6 = t_3
\end{aligned}
\ee
where we assume $t_1 - t_2 >0$, besides $t_a \geq 0$ for all $a$.
This example was considered in ref.~\cite{Zotto:2021xah},
where it engineers $SU(3)$ gauge theory with zero Chern-Simons level.
Here we clarify a few points.
For this space, $H^2$ is three dimensional while $H^2_\text{comp}$ is two dimensional,
with the corresponding compact toric divisors given by $X\cap\{p_4=0\}$ and $X\cap\{p_5=0\}$.
Thus this is the first example when our $\psi$-map has a one-dimensional co-kernel.
The shift equation reads
\begin{multline}
 \cf_\e (\bt) - e^{-\e_4 m_1 - \e_5 m_2} \cf_\e (t_1 + 3m_1, t_2 + 2m_1-m_2, t_3 -m_1 + 2m_2) = \\
 \frac{4}{3} m_1^3 + \frac{4}{3} m_2^3 - \frac{1}{2} m_1^2 m_2 - \frac{1}{2} m_1 m_2^2 \\
 + m_1^2 (t_1 + \frac{1}{2} t_2) + m_2^2 (t_1 - t_2 + \frac{3}{2} t_3) + m_1 m_2 (t_2- t_1) \\
 + m_1 (t_1 t_2 - \frac{1}{2} t_2^2) + m_2 (t_1 t_3 - t_2 t_3 + \frac{1}{2} t_3^2) + O (\e)
\label{shift-su3}
\end{multline}
where we require $t_1-t_2 > -m_1-m_2$ and $t_3-m_1+2m_2>0$ to remain in the same chamber.
We can approach the boundary in two ways:
either by setting $m_1=-\frac{1}{3} t_1$ and $m_2=t_2-\frac{2}{3} t_1$,
in the sub-chamber $t_3+2t_2-t_1>0$
\begin{multline}
 \cf_\e (t_1,t_2,t_3) =  e^{\e_4 t_1/3 + \e_5 (2t_1-3t_2)/3} \cf_\e (0,0,t_3+2t_2 - t_1) \\
 +\frac{t_2^3}{3} - \frac{t_1 t_2^2}{3} - \frac{t_1 t_2 t_3}{3} + \frac{t_2^2 t_3}{2}
 - \frac{t_1 t_3^2}{3} + \frac{t_2 t_3^2}{2} + O (\e)
\end{multline}
or by setting $m_1 = -\frac{1}{3}t_3 - \frac{2}{3}t_2$
and $m_2=-\frac{2}{3} t_3 - \frac{1}{3} t_2$,
in the sub-chamber $t_3+2t_2-t_1<0$
\begin{multline}
 \cf_\e (t_1,t_2,t_3) = e^{\e_4 (t_3+2t_2)/3 + \e_5 (2t_3+t_2)/3} \cf_\e (t_1-t_3-2t_2,0,0) \\
 +\frac{t_2^3}{3} - \frac{t_1 t_2^2}{3} - \frac{t_1 t_2 t_3}{3} + \frac{t_2^2 t_3}{2}
 - \frac{t_1 t_3^2}{3} + \frac{t_2 t_3^2}{2} + O (\e)
\end{multline}
which gives the same five dimensional prepotential $\wp_3(t,m(t))$.
One can also set $m_1 = -\frac{1}{3}t_1$ and $m_2= -\frac{1}{6} t_1 - \frac{1}{2}t_3$
\begin{multline}
 \cf_\e (t_1,t_2,t_3) = e^{\e_4 t_1 /3 + \e_5 (t_1+3t_3)/6}
 \cf_\e (0,t_2 - \frac{1}{2}t_1 + \frac{1}{2}t_3,0) \\
 - \frac{1}{24} \Big( {-t_1^3} + t_3^2 (-6 t_2 + t_3)
  + 3 t_1^2 (2 t_2 + t_3) + t_1 (-4 t_2^2 - 4 t_2 t_3 + 5 t_3^2) \Big)
 + O (\e)
\end{multline}
with the caveat that this choice lies outside of the chamber.
The partition function satisfies
\be
Z_q (\bT) - q_4^{M_1}q_5^{M_2} Z_q(T_1+3M_1,T_2+2M_1-M_2,T_3-M_1+2M_2) = \cp (\bT,M;q)
\ee
with the polynomial
\begin{multline}
 \cp (\bT,M;1) =
 \frac43 (M_1^3 + M_2^3) - \frac12 (M_1^2 M_2 + M_1 M_2^2) \\
 + M_1^2 (T_1 + \frac12 T_2) + M_2^2 (T_1 - T_2+ \frac32 T_3) + M_1 M_2 (T_2 - T_1) \\
 + M_1 (T_1 T_2 - \frac12 T_2^2) + M_2 (T_1 T_3 - T_2 T_3 + \frac12 T_3^2) \\
 - \frac13 (M_1 + M_2)
\end{multline}
Here we see quantum corrections to \cref{shift-su3}, in the last line.
In agreement with \cref{semic-P}, this has the expected form for a Calabi-Yau threefold,
namely $\cp(\bT,M;1) = \cp_3 (\bT,M) + \cp_1 (\bT,M)$,
where the quantum correction $\cp_1$ corresponds to genus-one constant map
contribution for topological strings,
provided we impose a relation between $m$ and $\bt$ as above.

Compare with the discussion around Fig.~5 of ref.~\cite{Intriligator:1997pq}.

\subsection{Local del Pezzo 2}
\label{ss:pezzo2}

Consider the Calabi-Yau three-fold $X$ defined as symplectic quotient
of $\BC^6$ by $U(1)^3$, with
\be
Q =
\begin{pmatrix}
1 & 1 & 1 & 0  & 0  & -3 \\
1 & 0 & 1 & -1 & 0  & -1 \\
0 & 1 & 1 & 0  & -1 & -1
\end{pmatrix}
\ee
According to \cref{def-moment-map}, momenta satisfy
\be
\label{momadp2}
\begin{aligned}
 p_1 + p_2 + p_3 - 3 p_6 = t_1 \\
 p_1 + p_3 - p_4 - p_6 = t_2 \\
 p_2 + p_3 - p_5 - p_6 = t_3
\end{aligned}
\ee
We assume $t_1-t_2-t_3 \geq 0$, $t_2 \geq 0$, and $t_3 \geq 0$.
This space has $\dim H^2(X)=3$ and $\dim H^2_\text{comp}(X)=1$,
the compact toric divisor being $X\cap \{p_6=0\}$.
The space of $p$'s satisfying \cref{momadp2} is a non-compact convex body,
whose five vertices are located at
\be
\begin{tabu}{r|l}
v_1 & p_3 = t_1, p_4 = t_1-t_2, p_5 = t_1-t_3 \\
v_2 & p_2 = t_1-t_2, p_3 = t_2, p_5 = t_1-t_3 \\
v_3 & p_1 = t_1-t_3, p_3=t_3, p_4=t_1-t_2 \\
v_4 & p_1 = t_2, p_2=t_1-t_2, p_5=t_1-t_2-t_3 \\
v_5 & p_1 = t_1-t_3, p_2=t_3, p_4=t_1-t_2-t_3
\end{tabu}
\ee
with all remaining $p$'s set to zero.
Vertices are connected by edges going from 1-3-5-4-2-1.
The equivariant volume is computed by the residue
\begin{multline}
\cf_\e (t_1,t_2,t_3) =
\oint_{C} \prod_{a=1}^3 \frac{d{\phi}_{a}}{2\pi {\ii}} \\
\frac{e^{ t_1 \phi_1 + t_2 \phi_2 +t_3 \phi_3}}
{(\e_1 + \phi_1+\phi_2) (\e_2 + \phi_1+\phi_3) (\e_3 + \phi_1+\phi_2+\phi_3)
 (\e_4 -\phi_2) (\e_5 - \phi_3) (\e_6 -3 \phi_1-\phi_2-\phi_3)}
\end{multline}
where we take poles at the flags
$(\phi_2,\phi_1,\phi_3)=(\e_4,-\e_1-\e_4,\e_1+\e_4-\e_2)$,
$(\phi_2,\phi_1,\phi_3)=(\e_4,-\e_1-\e_4,\e_1-\e_3)$,
$(\phi_3,\phi_1,\phi_2)=(\e_5,-\e_2-\e_5,\e_2+\e_5-\e_1)$,
$(\phi_3,\phi_1,\phi_2)=(\e_5,-\e_2-\e_5,\e_2-\e_3)$,
and
$(\phi_2,\phi_3,\phi_1)=(\e_4,\e_5,-\e_3-\e_4-\e_5)$,
the last one with an extra minus sign.
The shift equation reads
\begin{multline}
\cf_\e(\bt) - e^{-m\e_6} \cf_\e(t_1+3m,t_2+m,t_3+m) = \\
\frac12
\left( m (t_1^2 - t_2^2 - t_3^2) + m^2 (3t_1 - t_2 - t_3) + \frac{7}{3}m^3 \right)
+ O (\e)
\end{multline}
where we require $m \geq t_2+t_3-t_1$, $m \geq -t_2$ and $m \geq -t_3$.
The quantum version reads
\be
Z_q (\bT) - q_6^M Z_q(T_1+3M,T_2+M,T_3+M) = \cp (\bT,M;q)
\ee
with the polynomial
\be
\cp (\bT,M;1) =
\frac76 M^3 + \frac12 M^2 (3T_1-T_2-T_3) + \frac12 M (T_1^2 - T_2^2 - T_3^2) - \frac16 M
\ee
We see again a quantum correction in the last term.

\subsection{\texorpdfstring{$\co(-3,-3)$}{O(-3,-3)} over
\texorpdfstring{$\BP^2 \times \BP^2$}{P2xP2}}

Consider the total space of the $\co(-3,-3)$ bundle over $\BP^2 \times \BP^2$.
This Calabi-Yau five-fold $X$ is realized as the quotient of $\BC^7$ by $U(1)^2$ with
\be
Q=
\begin{pmatrix}
 1 & 1 & 1 & 0 & 0 & 0 & -3 \\
 0 & 0 & 0 & 1 & 1 & 1 & -3 \\
\end{pmatrix}
\ee
Momenta satisfy
\be
\begin{aligned}
  p_1 + p_2 + p_3 - 3p_7 &= t_1 \\
  p_4 + p_5 + p_6 - 3p_7 &= t_2
\end{aligned}
\ee
This space has two-dimensional $H^2$ and one-dimensional $H_\text{comp}^2$,
with the corresponding compact divisor given by $X\cap\{p_7=0\}$.
The shift equation reads
\begin{multline}
\cf_\e (t_1, t_2) - e^{-m \e_7} \cf_\e (t_1 + 3m, t_2 + 3m)= \\
\frac{81}{20} m^5 + \frac{27}{8} m^4 (t_1 + t_2)
+ \frac{3}{4} m^3 (t_2^2 + 4 t_1 t_2 + t_1^2) +
\frac{3}{4} m^2 (t_1 t_2^2 + t_1^2 t_2) + \frac{1}{4} m t_1^2 t_2^2 + O (\e)
\label{CY5-shift}
\end{multline}
where we assume $t_1\geq 0$, $t_2 \geq 0$ and $t_1 + 3m \geq 0$, $t_2 + 3m \geq 0$.
By choosing $m= - \frac{1}{3} t_1$, which lies on the boundary of the chamber, we obtain
\be
\cf_\e (t_1, t_2) = e^{\e_7 t_1/3} \cf_\e (0, t_2 - t_1)
-\frac{t_1^5}{360} + \frac{t_1^4 t_2}{72} - \frac{t_1^3 t_2^2}{36}
\ee
One can also produce a solution symmetric in $t_1$ and $t_2$,
with the caveat that this choice of $t$'s does not lie
in the same chamber as the one we started from.

We can also introduce the partition function
\begin{multline}
 Z_q(T_1, T_2)
 = \oint \frac{dw_1 dw_2}{(2\pi \ii)^2 w_1 w_2} \\
   \frac{w^{-T_1}_1 w_2^{-T_2}}{(1-q_1 w_1)(1-q_2 w_1)(1-q_3 w_1)
   (1-q_4 w_2)(1-q_5 w_2)(1-q_6 w_2)(1-q_7 w_1^{-3} w_2^{-3})}
\end{multline}
and the quantum shift equation is
\be
 Z_q(T_1, T_2) - q_7^M Z_q (T_1 + 3M, T_2 + 3M)= \cp (T_1, T_2, M; q)
\ee
For $M>0$, the polynomial takes the form
\begin{multline}
   \cp (T_1, T_2, M; q)= \sum_{s=0}^{M-1} q_7^s
   \oint \frac{dw_1 dw_2}{(2\pi \ii)^2 w_1 w_2} \\
   \frac{w^{-T_1-3s}_1 w_2^{-T_2-3s}}{(1-q_1 w_1)(1-q_2 w_1)(1-q_3 w_1)
   (1-q_4 w_2)(1-q_5 w_2)(1-q_6 w_2)}
\end{multline}
A direct calculation gives
\be
 \cp (T_1, T_2, M; q)= \sum_{s=0}^{M-1} q_7^s
 \sum_{l=0}^{T_1 + 3s} \sum_{m=0}^{T_1+3s-l} q_1^l
 q_2^m q_3^{T_1+3s-l-m} \sum_{p=0}^{T_2+3s} \sum_{k=0}^{T_2+3s-p}
 q_4^p q_5^k q_6^{T_2+3s-p-k}
\ee
Using the elementary identities
\begin{multline}
    \sum_{s=1}^n s = \frac{n(n+1)}{2}, \quad
    \sum_{s=1}^n s^2 = \frac{n(n+1)(2n+1)}{6}, \\
    \sum_{s=1}^n s^3 = \frac{n^2 (n+1)^2}{4}, \quad
    \sum_{s=1}^n s^4 = \frac{n(n+1)(2n+1)(3n^2 + 3n -1)}{30}
\end{multline}
we can calculate
\begin{multline}
 {\cp} (T_1, T_2, M; 1) =
 \sum_{s=0}^{M-1} \sum_{l=0}^{T_1 + 3s} \sum_{m=0}^{T_1+3s-l}
 \sum_{p=0}^{T_2+3s} \sum_{k=0}^{T_2+3s-p} 1 \\
 = \frac{1}{4}\sum_{s=0}^{M-1}
 \Big ( (T_1^2 + 3 T_1 +2) + 3 s( 2T_1 +3) + 9s^2 \Big )
 \Big ( (T_2^2 + 3 T_2 +2) + 3 s( 2T_2 +3) + 9s^2  \Big ) = \\
 \frac{81}{20} M^5 + \frac{27}{8}M^4 (T_1+T_2)
 + \frac{3}{4} M^3 (T_1^2+T_2^2 + 4 T_1 T_2) +
 \frac{3}{4} M^2 (T_1^2 T_2+T_1 T_2^2)+ \frac{1}{4} M T_1^2 T_2^2 \\
 - \frac{15}{4} M^3 - \frac{15}{8}M^2 (T_1+T_2)
 - \frac{1}{4} M  (T_1^2+T_2^2) - \frac{3}{4}M T_1 T_2 \\
 +\frac{7}{10} M
\label{Vol-q-CY5}
\end{multline}
The different orders of this polynomial are controlled
by the semi-classical expansion of the Todd class in \cref{Todd-exp}.
The fifth-order line corresponds to the classical contribution in \cref{CY5-shift}.
The forth-order is absent due to Calabi-Yau-condition ($c_1=0$).
The third-order line is controlled by $c_2$
and the second-order line is absent due to Calabi-Yau-condition
(since this term is controlled by $c_1 c_2$).
The linear term in the last line is controlled by $3c_2^2 -c_4$.

\section{Examples with higher compact cohomologies}
\label{s:examples-4}

In \cref{s:shift} we derived the shift equations
for non-empty $H^2_\text{comp}(X) = H_{2d-2}(X)$
and in \cref{s:examples-2} we provided examples of that situation.
There are cases when $H_\text{comp}^2(X)$ vanishes,
but higher-degree cohomology with compact support does not,
e.g. $H^4_\text{comp}(X) = H_{2d-4}(X) \neq 0$.

In this section, we concentrate on the case of non-zero $H_\text{comp}^4(X)$,
assuming $H^2_\text{comp}(X) = 0$.
We make a few observations, although a few aspects of this story are still unclear to us.

We inspect all intersections of two toric divisors $D_i$ and $D_j$,
and those intersections that are compact generate $H_\text{comp}^4(X)$.
Let us assume that the intersection of divisors $D_I$ and $D_J$ is compact,
for some \emph{fixed} $I$ and $J$.
This means that, if from the charge matrix $(Q_i^a)$
we remove the $I$-th and $J$-th columns,
then we obtain the charge matrix of a compact space.
We can then repeat the procedure from \cref{ss:shift-der},
but removing two columns in $Q$-matrix.
We get
\begin{multline}
\cf_\e (\bt) - e^{-m^I \e_I} \cf_\e (t^a - Q_I^a m^I)
- e^{-m^J \e_J} \cf_\e (t^a - Q_J^a m^J)
+ e^{-m^I \e_I-m^J \e_J} \cf_\e (t^a - Q_I^a m^I - Q_J^a m^J ) \\
= \oint_{C} \prod_{a=1}^{r} \frac{d{\phi}_{a}}{2\pi {\ii}} \,
\frac{e^{ t^{a} {\phi}_{a}} (1 - e^{-m^I (\e_I + Q_I^a \phi_a)})
 (1 - e^{-m^J (\e_J + Q_J^a \phi_a)})}
{\prod_{i=1}^{N} \left( {\e}_{i} + Q_{i}^{a} {\phi}_{a} \right)}
\label{proof-h4-shift}
\end{multline}
where we do \emph{not} sum over $I$ and $J$.
We obtain regular expressions in $\e$'s,
since in denominators we effectively have the charge matrix of a compact space.
The novelty is that we get a second-order difference equation.
We can write
\begin{multline}
 \cf_\e (\bt) - e^{-m\cdot \e} \cf_\e (\bt+ \psi\cdot m)
 - e^{-\td{m}\cdot \e} \cf_\e (\bt+ \psi\cdot \td{m})
 + e^{-m\cdot \e - \td{m}\cdot \e} \cf_\e (\bt+ \psi\cdot m + \psi\cdot \td{m}) \\
 = \wp_d (\bt,m, \td{m}) + O (\e) = \text{regular in $\e$}
 \label{shift-4com-general}
\end{multline}
where $m$ and $\td{m}$ correspond to fixed toric divisors $D$ and $\td D$
that have a compact intersection
and $\wp$ is a polynomial.\footnote
{With slight abuse of notation, we use the same letters $\wp$ and $\cp$
for the higher cohomologies.}
Analogously, one can derive the quantum version of second-order shift equations,
and discuss quantum corrections.
We are not going to write explicit formulas, as they are straightforward by now.
At this point, we do not understand the geometric meaning
of these second-order shift equations.
We provide a couple of examples below.

Let us comment on the analog of \cref{general-exp-CD} for the present case.
If we fix two divisors $D_I$ and $D_J$ such that their intersection $D_I D_J$ is compact,
then we have the expansion
\be
Z_q (\bT) = \sum_{n,m=0}^\infty q_{D_I}^n q_{D_J}^m Z^{D_I D_J}_q (\bT+\psi \cdot n +\psi \cdot  m)
\label{PT-2div-general}
\ee
where $q_{D_I}$, $q_{D_J}$ are equivariant parameters corresponding to divisors $D_I$ and $D_J$
and $Z^{D_I D_J}_q (\bT)$ is formally the partition function
of the compact intersection of divisors $D_I D_J$
and thus it is a polynomial.
However, in the above sum $Z^{D_I D_J}_q$ can be a different polynomial for different parts of sum,
due to chamber issues related to the values of $\bT+\psi \cdot n +\psi \cdot m$.

\subsection{Resolved conifold}

The simplest example $X$ with empty $H^2_\text{comp}(X)$
and non-empty $H^4_\text{comp}(X)$ is the resolved conifold.
This Calabi-Yau three-fold is the resolution of the conifold singularity in $\BC^4$,
and it corresponds to the total space of $\co(-1) \oplus \co(-1) \rightarrow \BP^1$.
For this space $H^2_\text{comp} (X) = H_4(X)$ is empty,
and $H^4_\text{comp}(X)=H_2(X)$ is one-dimensional.

This $X$ is the quotient of $\BC^4$ by $U(1)$ with $Q=(1,1,-1,-1)$.
Momenta satisfy
\be
 p_1 + p_2 - p_3 - p_4 = t
\ee
The intersection of divisors $X\cap \{p_3=0\}$ and $X\cap\{p_4=0\}$ is compact,
and by Poincar\'e duality it generates $H^4_\text{comp}(X)$.
The equivariant volume is
\be
\cf_\e (t)= \frac{e^{-\e_1 t}}{(\e_2 -\e_1)(\e_3+\e_1)(\e_4+\e_1)}
+ \frac{\e^{-\e_2 t}}{(\e_1 -\e_2)(\e_3+\e_2)(\e_4 + \e_2)}
\ee
\Cref{shift-4com-general} becomes
\begin{multline}
\cf_\e (t) -  e^{-m \e_3} \cf_\e (t+m) -  e^{- \td{m} \e_4} \cf_\e (t+\td{m})
+ e^{-m \e_3 - \td{m} \e_4} \cf_\e (t+m +\td{m}) \\
= t m \td{m} + \frac{1}{2} m^2 \td{m} + \frac{1}{2} m \td{m}^2 + O (\e)
\label{RC-double-shift}
\end{multline}
There is no natural way to solve this equation, if we want to stay in the same chamber $t >0$.
The partition function
\be
Z_ q (T) = \sum_{s,n,k,l \geq 0, \, s+n-k-l=T} q_1^s q_2^n q_3^m q_4^l
\ee
can be rewritten as
\be
 Z_q (T) = \sum_{k=0}^\infty \sum_{l=0}^\infty q_3^k q_4^l
 \sum_{s=0}^{T+k+l} q_1^s q_2^{T+k+l-s} =
 \sum_{k=0}^\infty \sum_{l=0}^\infty q_3^k q_4^l
 Z^{\BP^1}_{q_1,q_2} (T+k+l)
\ee
which illustrates the statement in \cref{PT-2div-general}.
One can derive the quantum analog of \cref{RC-double-shift}
\be
 Z_q (T) - q_3^M Z_q (T+M) - q_4^{\td{M}} Z_q (T+\td{M})
 + q_3^M q_4^{\td{M}} Z_q (T+ M + \td{M})
 = {\cp} (T,M, \td{M};q)
\ee
where $T,M, \td{M}$ are integers and ${\cp}$ is a polynomial in $q$'s,
\be
 {\cp} (T,M, \td{M};q) =
 \sum_{k=0}^{M-1} \sum_{l=0}^{\td{M}-1} q_3^k q_4^l Z^{\BP^1}_{q_1,q_2} (T+k+l)
\ee
such that
\be
 {\cp} (T,M, \td{M};1) =
 TM\td{M} + \frac{1}{2} M\td{M}^2 + \frac{1}{2} M^2 \td{M}
\ee
which agrees with the RHS of \cref{RC-double-shift}.
There are no quantum corrections.

\subsection{Calabi-Yau four-fold}

As another example $X$ with empty $H_\text{comp}^2(X)$ and non-empty $H_\text{comp}^4(X)$,
consider the Calabi-Yau four-fold given by the quotient $\BC^6//U(1)^2$ with
\be
Q =
\begin{pmatrix}
1 & 1 & 0 & 0 & -1 & -1 \\
0 & 0 & 1 & 1 & -1 & -1
\end{pmatrix}
\ee
Momenta satisfy
\be
\begin{aligned}
p_1 + p_2 - p_5 - p_6 &= t_1 \\
p_3+p_4 - p_5 - p_6 &= t_2
\end{aligned}
\ee
It has empty $H_6(X)=H_\text{comp}^2(X)$.
However, $H_4(X)=H_\text{comp}^4(X)$ is one-dimensional,
generated by the intersection of divisors $X\cap\{p_5=0\}$ and $X\cap\{p_6=0\}$.
The equivariant volume is
\be
\cf_\e (\bt) = \int \frac{d\phi_1 d\phi_2}{(2\pi \ii)^2} \,
\frac{e^{\phi_1 t_1} e^{\phi_2 t_2}}
{(\phi_1 + \e_1) (\phi_1 + \e_2)(\e_5 - \phi_1-\phi_2)(\e_6 -\phi_1
- \phi_2)(\phi_2+\e_3)(\phi_2+\e_4)}
\ee
By a direct calculation, we can derive the second-order shift equation
\begin{multline}
\cf_\e (t_1 , t_2)  - e^{-m\e_5} \cf_\e (t_1 + m, t_2 + m)
- e^{ - \td{m} \e_6} \cf_\e (t_1 + \td{m}, t_2+ \td{m}) \\
+ e^{- m\e_5 - \td{m} \e_6} \cf_\e (t_1 + m + \td{m}, t_2 + m + \td{m}) \\
= m \td{m} t_1 t_2 + \frac{1}{2}(t_1 + t_2) \Big ( m \td{m}^2 + m^2 \td{m} \Big )
+ \frac{1}{2} m^2 \td{m}^2 + \frac{1}{3} \Big (m^3 \td{m}  + m \td{m}^3 \Big )
+ O (\e)
\label{eq-H4-exam1}
\end{multline}
We are not sure how to analyze this equation.
The partition function
\be
  Z_q (T_1, T_2) = \sum_{k=0}^\infty \sum_{l=0}^\infty q_5^k q_6^l \,
  Z^{\BP^1}_{q_1,q_2} (T_1+k+l) \, Z^{\BP^1}_{q_3,q_4} (T_2+k+l)
\ee
satisfies the quantum analog of \cref{eq-H4-exam1}
\begin{multline}
Z_q (\bT) - q_5^M Z_q (T_1 + M, T_2 +M)
- q_6^{\td{M}} Z_q (T_1 + \td{M}, T_2 + \td{M})
+ q_5^{M} q_6^{\td{M}} Z_q (T_1 +M + \td{M} , T_2 + M + \td{M}) \\
 = {\cp} (T_1, T_2, M, \td{M};q) =
 \sum_{k=0}^{M-1} \sum_{l=0}^{\td{M}-1} q_5^k q_6^l ~
  Z^{\BP^1}_{q_1,q_2} (T_1+k+l) Z^{\BP^1}_{q_3,q_4} (T_2+k+l)
\end{multline}
where $\cp$ is a polynomial in $q$'s, and $M_1$, $M_2$ are integers.
If we evaluate it at $q=1$, we get
\begin{multline}
 {\cp} (T_1, T_2, M, \td{M};1) = \\
 M \td{M} T_1 T_2 + \frac{1}{2}(T_1 + T_2) \Big ( M \td{M}^2 + M^2 \td{M} \Big )
 + \frac{1}{2} M^2 \td{M}^2 + \frac{1}{3} \Big (M^3 \td{M}  + M \td{M}^3 \Big )
 + \frac{2}{3} M \td{M}
\end{multline}
where the last term corresponds to a quantum correction.

\section{Higher times}
\label{s:higher-t}

So far we concentrated on the equivariant volume $\cf_\e (\bt)$,
as defined in \cref{s:geometry}.
We can also insert under the integral monomials $\mathrm{Mon}_{k}(\phi)$
of degree $k$ in $\phi$
\be
 \oint_{C} \prod_{a=1}^{r} \frac{d{\phi}_{a}}{2\pi {\ii}} \,
\frac{e^{ t^{a} {\phi}_{a}} \mathrm {Mon}_{k}(\phi)}
{\prod_{i=1}^{N} \left( {\e}_{i} + Q_{i}^{a} {\phi}_{a} \right)}
= \int_X e^{\omega +H} (G_{2k} + \cdots + G_0)
\label{corfunc-pol}
\ee
which can be interpreted as insertion of the equivariant class
$(G_{2k} + \cdots + G_0)$ of degree $2k$ in the DH formula on $X$.
These integrals are calculated in the same fashion as we discussed previously
(i.e., keeping the same integration contour as before).
To encode these classes, it is convenient to define a generating function for higher times
\be
\cf_\e (\bt; \bt^{(2)}) = \oint_{C} \prod_{a=1}^{r} \frac{d{\phi}_{a}}{2\pi {\ii}} \,
\frac{\exp ( t^a \phi_a+ t^{ab} \phi_a \phi_b )}
{\prod_{i=1}^{N} \left( {\e}_{i} + Q_{i}^{a} {\phi}_{a} \right)}
\ee
where we introduced higher times $\bt^{(2)}$
(the generalization to other times $\bt^{(n)}$ is straightforward).
The generating function $\cf_\e (\bt; \bt^{(2)})$ satisfies many relations, e.g.
\be
 \left( \frac{\partial^2}{\partial t^a \partial t^b}
 - \frac{\partial}{\partial t^{ab}} \right)
 \cf_\e (\bt; \bt^{(2)}) = 0
\ee
and at first it may appear to be a redundant object.
However, as we explain in next section, in some special setting
it can lead to interesting results.

The generating function with higher times can be used to write shift equations
corresponding to the action of higher compact support cohomologies.
Let us concentrate on $H^4_\text{comp}(X)$,
as the generalization to higher cohomologies (higher times) is straightforward.
For $X$ with non-empty $H^4_\text{comp}(X)$,
we follow a similar route to the previous sections, and write
\begin{multline}
\cf_\e (\bt; \bt^{(2)}) - e^{m^{IJ} \e_I\e_J}
\cf_\e (t^a+\e_J Q_I^a m^{IJ} +\e_I Q_J^a m^{IJ}; t^{ab}+Q_I^a Q_J^b m^{IJ}) \\
 = \text{polynomial in } (\bt, \bt^{(2)}; \mm) + O (\e)
\label{shift-higher-times}
\end{multline}
where we are allowed to sum over pairs $(I,J)$
such that divisors $(D_I, D_J)$ have a compact intersection.
This is an important difference compared to \cref{proof-h4-shift}.
It is useful to extend the definition of $\psi$-map to higher times:
\be
\psi: t^{ab} \mapsto t^{ab} + \sum_{I,J} Q_I^a Q_J^b m^{IJ}
\ee
so that we can rewrite \cref{shift-higher-times} as
\be
\cf_\e (\bt; \bt^{(2)}) - e^{\e\cdot \mm\cdot\e}
\cf_\e (\bt+\psi\cdot(\e\cdot \mm); \bt^{(2)}+\psi\cdot \mm) \\
 = \text{polynomial in } (\bt,\bt^{(2)};\mm) + O (\e)
\ee
To prove this equation, it is enough to analyze the pole structure of the LHS and
observe that the problem becomes compact.
The main difference with our previous discussion of $H^4_\text{comp}$
in \cref{s:examples-4} is that
the equation with higher times is a first-order difference equation while
the shift equations in \cref{s:examples-4} are second-order difference equations.

\section{Calabi-Yau five-folds and factorization}
\label{s:proof}

The construction in this section is motivated by the study of the classical action
for higher-rank Donaldson-Thomas theory \cite{Zotto:2021xah}.
Consider the Calabi-Yau five-fold realized as the product $CY_5 = CY_3 \times A_{n-1}$
of four-dimensional $A_{n-1}$ space and
a toric Calabi-Yau three-fold with non-empty $H_\text{comp}^2 (CY_3) \approx H_4(CY_3)$.
For $CY_5$, $H_8(CY_5)=H_\text{comp}^2(CY_5)$ is empty,
but $H_6(CY_5)=H^4_\text{comp}(CY_5)$ is non-empty.

Our $CY_5$ is toric and it comes from the quotient
\be
CY_5 = CY_3  \times A_{n-1} = \BC^{r+3}//U(1)^r \times  \BC^{n+1}//U(1)^{n-1}
= \BC^{n+r+4}//U(1)^{n+r-1}
\ee
where we use the same notations as before.
To distinguish the two factors, we put tildes over all objects for $A_{n-1}$ space.
The $CY_3$ has compact divisors $\{p^I=0\}$
for $I = 1,\ldots,\dim H^2_\text{comp}(CY_3)$, while
$A_{n-1}$ has compact divisors $\{ \td{p}^{\td{I}}=0\}$ for $\td{I}=1, \ldots, n-1$.
Therefore, on $CY_5$ the intersections of divisors
$\{p^I=0\} \cap \{ \td{p}^{\td{I}}=0\}$ are compact
and generate $H_\text{comp}^4 (CY_5)$.

The equivariant volume of $CY_5$ is
\be
 \cf_{\e,\td\e}^{CY_5} (\bt, \td{\bt}) = \oint_{C}\prod_{a=1}^{r}
 \frac{d{\phi}_{a}}{2\pi {\ii}} ~
 \prod_{\td{a}=1}^{n-1} \frac{d{\td{\phi}_{\td{a}}}}{2\pi {\ii}} ~
 \frac{e^{t^{a} {\phi}_{a} + \td{t}^{\td{a}} {\td\phi}_{\td{a}}}}
 {\prod_{i=1}^{r+3} \left( {\e}_{i} + Q_{i}^{a} {\phi}_{a} \right)
 \prod_{\td{i}=1}^{n+1}
 \left( {\td\e}_{\td{i}} + \td{Q}_{\td{i}}^{\td{a}} {\td\phi}_{\td{a}} \right) }
\ee

Following ideas from \cref{s:examples-4}, we could write
a second-order difference equation for $\cf_{\e,\td\e}^{CY_5} (\bt, \td{\bt})$
corresponding to the action of $H_\text{comp}^4(CY_5)$.
Instead, we define the equivariant volume with higher times as outlined in previous section.
We only turn on the off-diagonal higher times,
with legs along compact divisors of each factor.
The explicit form is
\be
\cf_{\e,\td\e}^{CY_5} (\bt, \td{\bt}; \psi\cdot \mm) =
\oint_{C}\prod_{a=1}^{r} \frac{d{\phi}_{a}}{2\pi {\ii}}~
\prod_{\td{a}=1}^{n-1} \frac{d{\td\phi}_{\td{a}}}{2\pi {\ii}}~
\frac{e^{t^{a} {\phi}_{a}+ \td{t}^{\td{a}} {\td\phi}_{\td{a}}+
m^{I\td{J}} Q_I^a \td{Q}_{\td{J}}^{\td{b}} \phi_a \td\phi_{\td{b}}}}
{\prod_{i=1}^{r+3} \left( {\e}_{i} + Q_{i}^{a} {\phi}_{a} \right)
\prod_{\td{i}=1}^{n+1} \left( {\td\e}_{\td{i}}
+ \td{Q}_{\td{i}}^{\td{a}} {\td\phi}_{\td{a}} \right) }
\label{CY5-equiv-full}
\ee
where we introduced higher times $\mm$.
This object is well-defined if we use the standard JK prescription for each factor.
Using \cref{shift-higher-times}, we can write a shift equation,
which is controlled by $H_\text{comp}^4(CY_5)$:
\begin{multline}
 e^{\e_I \td{\e}_{\td{J}} m^{I\td{J}}} \cf_{\e,\td\e}^{CY_5} \,
 \left( t^a + m^{I\td{J}} Q_I^a \td{\e}_{\td{J}}, \td{t}^{\td{b}}
 + m^{I\td{J}} \td{Q}_{\td{J}}^{\td{b}} \e_I,
 m^{I\td{J}} Q_I^a \td{Q}^{\td{b}}_{\td{J}} \right)
 - \cf_{\e,\td\e}^{CY_5} \left( t^a, \td{t}^{\td{b}}, 0 \right) \\
 = \text{polynomial in } (\bt, \td{\bt}; \mm) + O (\e,\td\e)
  \label{shift-CY5-a}
\end{multline}
which can be derived by analyzing the pole structure of LHS,
as in our previous discussion.

To simplify our discussion, we set $\e_I=0$ and $\td{\e}_{\td{J}}=0$
(for all compact divisors of $CY_3$ and $A_{n-1}$).\footnote
{This is without loss of generality,
 as they can be turned back on using \cref{gauge-sym-eps}.
 We also keep denoting the subscript indices in $\cf$ in the same way,
 with the understanding that some of the $\e$'s may be zero.}
If we also set $\td{\bt}=0$, we are left with
\be
\cf_{\e,\td\e}^{CY_5} (\bt, 0; \psi\cdot \mm) =
\oint_{C}\prod_{a=1}^{r} \frac{d{\phi}_{a}}{2\pi {\ii}}~
\prod_{\td{a}=1}^{n-1} \frac{d{\td\phi}_{\td{a}}}{2\pi {\ii}}~
\frac{\exp \left( t^{a}\phi_a + Q_I^a (m^{I\td{J}} \td{Q}_{\td{J}}^{\td{b}}
 \td\phi_{\td{b}}) \phi_a \right)}
{\prod_{i=1}^{r+3} \left( {\e}_{i} + Q_{i}^{a} {\phi}_{a} \right)
\prod_{\td{i}=1}^{n+1} \left( {\td\e}_{\td{i}}
 + \td{Q}_{\td{i}}^{\td{a}} {\td\phi}_{\td{a}} \right) }
\label{CY5-times-rest}
\ee
where higher times $m^{I\td{J}} \td{Q}_{\td{J}}^{\td{b}}$ can be thought of
as K\"ahler parameters on $A_{n-1}$
with an extra label corresponding to compact divisors of $CY_3$.
The explicit evaluation of \cref{CY5-times-rest} gives
\be
 \cf_{\e,\td\e}^{CY_5} (\bt, 0;\psi\cdot \mm) =
 \sum_{p=1}^n \frac{1}{\ve_4^{(p)} \ve_5^{(p)}} \cf_\e^{CY_3} (\bt - \psi \cdot H^p)
 \label{CY5-Hamil-exp}
\ee
as a sum over fixed points of $A_{n-1}$ space,
where $(\ve_4^{(p)}, \ve_5^{(p)})$ are local equivariant parameters at fixed point $p$
(see Appendix A.2 in ref.~\cite{Zotto:2021xah})
and $H^p$ is the value of a family of $A_{n-1}$ Hamiltonians
(parametrized by compact divisors on $CY_3$) at point $p$.
These are defined in \cref{HtoA}, and their relation to $\mm$ is given
by the change of basis
\be
 m^{I\td{J}} \td{Q}_{\td{J}}^{\td{a}} = \al^{I(n-\td{a}+1)} - \al^{I(n-\td{a})}
 \label{new-basis}
\ee
To make this map one-to-one,
we impose the condition $\sum_{p=1}^n \al^{Ip} =0$ for every $I$.
Switching from $\mm$'s to $\al$'s makes formulas simpler
(this is the standard trick relating $su(n)$ Cartan matrix
to the diagonal $u(n)$ Cartan matrix).

Rewriting \cref{shift-CY5-a} for this case, we get
\be
 \cf_{\e,\td\e}^{CY_5} (\bt, 0; \psi\cdot \mm)
 - \cf_{\e,\td\e}^{CY_5} (\bt, 0; 0) = \cp^{CY_5} (\bt; \mm) + O  (\e,\td\e)
 \label{shift-CY5-final}
\ee
and our goal is to calculate the polynomial $\cp^{CY_5}$ in terms of $CY_3$ data.

\subsection{\texorpdfstring{$CY_3$}{CY3} data}

We recall some facts about toric Calabi-Yau three-folds
and set the notations for further discussions.
Let $CY_3$ be a toric Calabi-Yau three-fold ($d=3$) with non-empty $H^2_\text{comp}(CY_3)$.
Expanding in powers of $\hbar$ the difference of volumes at $\bT$ and $\bT+\psi \cdot M$,
using \cref{TD-class-contour,Todd-exp}, we get
\be
\cp^{CY_3} (\bT,M;q) = \frac1{\hbar^d} \oint_C \prod_{a=1}^r \frac {d\phi_a}{2\pi \ii} \,
\frac {e^{\phi\cdot \bt} - e^{\phi \cdot (\bt+\psi \cdot m)} \prod_I q_I^{M^I}} {\prod_i x_i}
\left( 1 + \frac{\hbar^2}{12} c_2 + O (\hbar^4) \right)
\ee
The Calabi-Yau condition implies $c_1 = \sum_i x_i = \sum_i \e^i$,
so $c_1$ and $c_1^2$ are $O (\e)$ and $O (\e^2)$, respectively;
being independent of $\phi$, they factor out of the integral,
and contribute higher orders in $\hbar$, since \cref{general-shift-class} implies
 (here we set $\e$'s to zero along the compact divisors)
\be
\cf_\e^{CY_3}(\bt) - \cf_\e^{CY_3}(\bt+\psi \cdot m) =
\text{a cubic polynomial } \cp^{CY_3}_3(\bt,m) + O (\hbar^4)
\label{CY3-shift-der}
\ee
That's because $O (\e)$ terms come with higher-degree polynomials in $(\bt,m)$.
We denote by $\cp^{CY_3}_{p,q}$ the part of $\cp^{CY_3}_3$ of degree $p$ in $\bt$
and $q$ in $m$, with $p+q=3$
\begin{multline}
  \cp^{CY_3}_3 (\bt,m) = \cp^{CY_3}_{0,3} (m) + \cp^{CY_3}_{1,2} (\bt,m)
  + \cp^{CY_3}_{2,3} (\bt,m) \\
  = A_{IJK} m^I m^J m^K + B_{aIJ} t^a m^I m^J + C_{ab I} t^a t^b m^I
\end{multline}
and clearly $\cp^{CY_3}_{3,0}=0$.
If we define
\be
\cc^{CY_3}_\e(\bt) = \frac{1}{12} \oint_C \prod_{a=1}^r \frac {d\phi_a}{2\pi \ii} \,
\frac {e^{\phi\cdot \bt}} {\prod_i x_i} \, c_2
\label{def-C-CY3}
\ee
then this satisfies
\be
\cc^{CY_3}_\e (\bt) - \cc^{CY_3}_\e (\bt+\psi \cdot m) =
\text{a linear polynomial } \cp^{CY_3}_1(m) + O (\hbar^2)
\label{shift-C-CY3}
\ee
and $\cp_1$ is a function of $m$ only.
So we get
\be
\cp^{CY_3} (\bT,M;q) =
\frac{\cp^{CY_3}_3 (\bt,m)}{\hbar^3} + \frac{\cp^{CY_3}_1 (m)}{\hbar} + O (\hbar)
\label{pfc}
\ee
Taking the limit $\hbar \to 0$, with fixed $\bT=\bt/\hbar$ and $M=m/\hbar$,
we get \cref{semic-P},
\be
\cp^{CY_3} (\bT,M;1) = \cp^{CY_3}_3(\bT,M) + \cp^{CY_3}_1(M)
\label{def-CY3-polyn}
\ee
As we explained in \cref{ss:solving-shift}, if we evaluate $\cp^{CY_3}_3(\bt,m)$
at some specific $m$'s,
we obtain the analog of triple intersection on $CY_3$,
which is a genus-zero contribution from constant maps
in Gromov-Witten theory.
Evaluating $\cp^{CY_3}_1 (m)$ at the same $m$'s gives
the genus-one contribution from constant maps
in Gromov-Witten theory.
As we explained, these objects are not unique but instead depend on some choices.

\subsection{Shift equations for \texorpdfstring{$CY_5$}{CY5}}

We calculate $\cp^{CY_5} (\bt;\mm)$ in \cref{shift-CY5-final} in terms of $CY_3$ data.
Combining \cref{CY5-Hamil-exp,CY3-shift-der}, we obtain
\be
 \cf_{\e,\td\e}^{CY_5} (\bt, 0;\psi\cdot \mm) =
 \sum_{p=1}^n \frac{1}{\ve_4^{(p)} \ve_5^{(p)}} \cf_\e^{CY_3} (\bt)
 - \sum_{p=1}^n \frac{1}{\ve_4^{(p)} \ve_5^{(p)}} \cp^{CY_3}_3(\bt, - H^p) + O (\e)
\ee
Observing that the first term in RHS is $\cf_{\e,\td\e}^{CY_5} (\bt, 0; 0)$, we get
\be
 \cp^{CY_5} (\bt;\mm) =
 - \sum_{p=1}^n \frac{1}{\ve_4^{(p)} \ve_5^{(p)}} \cp^{CY_3}_3 (\bt,- H^p)
\ee
We can write this more explicitly
\be
 \cp^{CY_5} (\bt; \mm) = \sum_{p=1}^n \frac{1}{ \ve_4^{(p)} \ve_5^{(p)}}
 \left( A_{IJK} H^{I p} H^{J p} H^{K p}
 - B_{aIJ} t^a H^{I p} H^{J p} + C_{ab I} t^a t^b H^{I p} \right)
\ee
Next we use some combinatorial properties of $A_{n-1}$ space
(see \cref{app-A} here and appendix C in ref.~\cite{Zotto:2021xah}),
to obtain
\begin{multline}
 \cf_{\e,\td\e}^{CY_5} (\bt, 0;\psi\cdot \mm) = \cf_{\e,\td\e}^{CY_5} (\bt, 0; 0) +
 \sum_{p=1}^n \cp^{CY_3}_{1,2} (\bt, \al^{p}) + g \sum_{p=1}^n \cp^{CY_3}_{0,3} (\al^{p}) \\
 + \frac{\ve}{2} \sum_{q < p} \cp^{CY_3}_{0,3} (\al^q-\al^p) + O (\e)
 \label{full-def-CY5-abs}
\end{multline}
where $g$ and $\ve$ are combinations of equivariant parameters on $A_{n-1}$
(see \cref{app-A}).

\subsection{Factorization in higher rank DT theory}

In this subsection, the discussion is within the context of the work \cite{Zotto:2021xah},
where we studied rank $n$ K-theoretic Donaldson-Thomas (DT) theory on a toric threefold $CY_3$
and we conjectured certain factorization properties for the classical action of this theory.
As a corollary of \cref{full-def-CY5-abs}, we show this factorization property for any toric $CY_3$
with non-empty $H^2_\text{comp}(CY_3)$.
We apply our previous discussion to prove equality 5.29 in ref.~\cite{Zotto:2021xah}
(see also appendix C there).
We define the classical action for $U(n)$ DT theory\footnote
{We previously set $\sum \al = 0$, so we are actually dealing with $SU(n)$ theory.
\Cref{tbp} below is also valid modulo $\sum \al$ terms.}
on $CY_3$ as
\be
u_n (\al;g,\bt) =
\frac{n \cf_\e^{CY_3} (\bt)}{(n\ve/2)^2-g^2} + \sum_{p=1}^n \cp^{CY_3}_{1,2} (\bt,\al^p)
+ g \sum_{p=1}^n \cp^{CY_3}_{0,3}(\al^p)- \frac{n}{2} \cc^{CY_3}_\e (\bt)
\label{def-classical-act}
\ee
where the last term is defined in \cref{def-C-CY3}.
We drop the dependence on partitions $K$, $\lambda$,
which is irrelevant to our present discussion.
In the context of DT theory, $g$ and $\ve$ can be regarded as couplings,
but they are related to the CY3 $\e$'s in a natural way:
if one takes into account the Calabi-Yau-5 picture,
then $g$ and $\ve$ are specific combinations
of equivariant parameters of the $A_{n-1}$ part of $CY_5$.
The equality we want to prove is
\be
u_n(\al;g,\bt) + \frac{\ve}2 \sum_{q<p} \cp^{CY_3} (0,\al^q-\al^p;1)
= \sum_{p=1}^n u_1 (0;g_p,\bt_p)
\label{tbp}
\ee
where $\cp^{CY_3} (0,\al_{qp};1)$ is defined in \cref{def-CY3-polyn},
and it corresponds to perturbative contributions in ref.~\cite{Zotto:2021xah},
where we denoted it by $|\cp_{qp}|$.
In \cref{tbp} we set
(no summation over $p$)
\be
\bt^p = \bt + \psi \cdot (g_p \al^p+\frac{\ve}2 \sigma^p),
\quad
g_p = g + \frac{\ve}2 (n+1-2p)
\ee
where we use notations from \cref{app-A} and $\psi$ corresponds to the $\psi$-map for $CY_3$.
\Cref{tbp} describes factorization of $U(n)$ gauge theory on $CY_3$ into copies of $U(1)$ theory.

To prove \cref{tbp}, we combine \cref{full-def-CY5-abs,shift-C-CY3}
\begin{multline}
\sum_{p=1}^n u_1(0;g_p,\bt^p) - \frac{n \cf_\e^{CY_3} (\bt)}{(n\ve/2)^2-g^2}
+ \frac{n}{2} \cc^{CY_3}_\e (\bt) = \\
\frac\ve2 \sum_{q<p} \cp^{CY_3}_{0,3} (\al^q-\al^p)
+ g \sum_{p=1}^n \cp^{CY_3}_{0,3} (\al^p)
+ \sum_{p=1}^n \cp^{CY_3}_{1,2}(\bt,\al^p)
+ \sum_{p=1}^n \cp^{CY_3}_1 (g_p \al^p +\frac\ve2 \sigma_p) + O (\e)
\end{multline}
Using \cref{pfc}, namely
\be
 \sum_{q<p} \cp^{CY_3} (0,\al^q-\al^p;1) =
 \sum_{q<p} \cp^{CY_3}_{0,3}(\bt,\al^q-\al^p) + \sum_{q<p} \cp^{CY_3}_1 (\al^q-\al^p)
\ee
we obtain \cref{tbp} up to order $O (\e)$.
If we choose the truncated $\hcf_\e^{CY_3}(\bt)$,
which satisfies the shift equation exactly, then
\be
\hcf_\e^{CY_3}(\bt) - \hcf_\e^{CY_3} (\bt+\psi \cdot m) = \cp^{CY_3}_3(\bt,m)
\ee
and using $\hcf_\e^{CY_3}(\bt)$ in \cref{def-classical-act} as classical action
we get \cref{tbp} exactly.
As described in \cref{ss:solving-shift}, upon certain choices one can solve the shift equation:
for example, we can choose as $\hcf_\e^{CY_3}(\bt)$
the polynomial $\wp_d (t^\al,(Q_{-1})_J^I t^J)$
(see \cref{ss:solving-shift} and the examples in \cref{ss:SU(3),ss:pezzo2}).
Alternatively, we can keep the singular terms in $\e$'s together with
$\wp_d (t_\al,(Q^{-1})^J_I t_J)$ as part of $\hcf_\e^{CY_3}(\bt)$,
and this still satisfies the shift equation exactly.

In ref.~\cite{Zotto:2021xah}, the numbers $\al^{Ip}$ were assumed to be integers.
In the present context the integrality of $\al^{Ip}$ does not play any role,
although we used the semi-classical expansion for $CY_3$.
Understanding this better would require embedding the Calabi-Yau fivefold picture
into a quantum mechanical framework.
At the moment, we do not know how to do this consistently in the presence
of higher times.

\subsection{\texorpdfstring{$M$}{M}-theory interpretation}

The classical action $u_n (\al;g,\bt)$ is related to the equivariant volume with higher times
$\cf_{\e,\td\e}^{CY_5} (\bt, 0; \psi\cdot \mm)$, defined in \cref{CY5-equiv-full} for $CY_5 = A_{n-1}\times CY_3$
(many arguments here can be extended to generic toric $CY_5$).
They differ by the semi-classical part $\cc^{CY_3}_\e (\bt)$ and the perturbative part of DT theory.
We discuss a possible interpretation of $\cf_{\e,\td\e}^{CY_5}$
as equivariant Chern-Simons term in $M$-theory, as anticipated in \cref{CS-11d-gen}.
On $CY_5$, the object
\be
 \cf_{\e,\td\e}^{CY_5} (\bt,\tilde{\bt}; \psi.m^{(2)}) = \int_{CY_5} e^{G_4 + G_2 + G_0}
\ee
corresponds to the exponent of some equivariant form
\be
 (d + \iota_v ) (G_4 + G_2 + G_0)=0
 \label{10D-equiv-form}
\ee
The choice of times $(\bt,\tilde{\bt}, m^{(2)})$ is related to the choice of equivariant forms
of degree 2 and 4.
With polar coordinates $(r,\phi)$ on the disk $D^2$, there is a natural action of $\partial_\phi + v$,
with $v$ being the toric action on $CY_5$,
and the equivariant form \cref{10D-equiv-form} admits a lift to $D^2 \times CY_5$
\be
(d+ \iota_{\partial_\phi} + \iota_v) \left( G_4 + G_2 - d(G_0 r^2 d\phi) + G_0 (1-r^2) \right) =0
 \label{equiv-ansatz}
\ee
We can deform this lift, while preserving its class in equivariant cohomology,
to some abstract equivariantly-closed form $G^{(4)} + G^{(2)} + G^{(0)}$ on $D^2 \times CY_5$,
whose exponent provides the natural equivariant extension of the CS term.
As the only fixed point on the disk is its center, by localization we have
\be
\cf_{\e,\td\e}^{CY_5} (\bt,\tilde{\bt}; \psi.m^{(2)}) =
\int_{D^2 \times CY_5} \exp (G^{(4)} + G^{(2)} + G^{(0)})
\ee
which makes \cref{CS-11d-gen} into a fully equivariant object.

If we restrict \cref{equiv-ansatz} from the disk to its boundary,
then on $S^1 \times CY_5$
the zero-form part vanishes and the conditions for four-form and two-form parts
coincide with the conditions on 11d supersymmetric solutions
in ref.~\cite{Gauntlett:2002fz}.
The two-form is constructed there as Killing spinor bilinear.
This simple observation suggests a deeper relation between equivariance
and 11d supersymmetric backgrounds,
which deserves further study.

\section{Conclusions}

This work was provoked by ref.~\cite{Zotto:2021xah},
where we tried to calculate triple intersection numbers
for non-compact toric Calabi-Yau manifolds using DH formula.
We were unable to reproduce the known results from the localization calculation.
In this work we address this puzzle and suggest a framework
to extract the intersections numbers using equivariant DH formula
for non-compact toric K\"ahler manifolds.
Unlike the compact case, for non-compact manifolds we cannot turn off
the equivariant parameters and extract easily geometrical data from the answer.
Our main insight is that the full equivariant symplectic volume satisfies
a difference equation (we refer to it as shift equation),
which is controlled by the action of compact support cohomology
on de Rham cohomology.
Upon certain non-canonical choices we can solve this equation
and extract the intersection polynomial.
We also consider the quantum mechanical analog of shift equations
and go through a number of explicit examples.
As a byproduct, we prove a result about factorization of classical actions
in the context of non-abelian Donaldson-Thomas theory on Calabi-Yau threefolds
(conjectured in ref.~\cite{Zotto:2021xah}).

\subsection{Future directions}

Physically, it would be important to investigate further the 5d theory
deformed by Calabi-Yau internal isometries,
and the role of the equivariant volume (or some related quantity).
Since extracting the intersection polynomial requires some non-canonical choices,
it is suggestive that in the context of string theory
the full equivariant symplectic volume of a toric Calabi-Yau threefold
should be taken seriously and equivariant parameters should have
a clear physical interpretation.

Mathematically, it would be interesting to extend our results to
the quantum cohomology ring,
and the study of Gromov-Witten invariants in the equivariant setup,
without making ad hoc choices for the equivariant parameters.

\appendix

\section{Norms and quantum K\"ahler potential}
\label{app}

We collect some facts about norms of states on K\"ahler quotients.
On $\BC^N$ define the norm
\be
|| \Psi ||^2 = \int_{{\BC}^N} \, \prod_{i=1}^{N} d^{2} z^i ~
|\Psi(\bz)|^2 e^{-\frac{1}{\hbar} \sum_{i=1}^{N} |z^{i}|^{2}}
\label{general-norm}
\ee
where the state $\Psi$ is a holomorphic function of $\bz$,
or a monomial if we want it to be diagonal under all $U(1)$ actions.
If we choose a $\Psi(\bz)$ that satisfies Gauss law, see \cref{Gauss-law},
then it descends to the quotient and it is interpreted as a section of the appropriate bundle.
Let us see what happens to \cref{general-norm} in this case.
Under the integral in \cref{general-norm}, we can insert
\be
 \int \prod_{a=1}^r dV_a ~ \det \Vert \sum_i Q^a_i Q_i^{a'} (e^{-Q_i V}p^i ) \Vert_{aa'}
 \prod_b \delta ( {\mu}^{b} (e^{-QV}\bp) - t^{b}) = 1
\ee
where $\mu^a(p)$ is defined in \cref{def-moment-map} and we use the shorthand notation
\be
e^{Q_i V} p^i \equiv \prod_{a=1}^r e^{Q_i^a V_a} p^i
\ee
suppressing the index $a$ when possible.
We also introduced the real auxiliary variables $V_a$.
Then  we can do the change of variables $e^{-QV} \bp \rightarrow \bp$ under the integral.
Assuming that $\Psi(\bz)$ satisfies \cref{Gauss-law}, we obtain
\be
||\Psi||^2 = \int_{{\BC}^N} \prod_{i=1}^N d^{2} z^i ~
|\Psi(\bz)|^2 e^{-K_q (\bp)}
\det \Vert \sum_i Q_i^a Q_i^{a'} p^i \Vert_{aa'} \,
\prod_b \delta(\mu^b (\bp)-t^b)
\label{final-norm}
\ee
where $K_q$ is defined as
\be
e^{- K_q (\bp)} := \int \prod_a dV_a \exp
\left[-\frac1{\hbar} \sum_i e^{Q_i V} p^i + T^a V_a + \sum_i Q_i^a V_a  \right]
\ee
The last term in the exponent comes from the change in the measure
and it is zero for Calabi-Yau quotients.
By construction, \cref{final-norm} gives a well-defined integral on the quotient
with the appropriate K\"ahler form.
To understand global issues, one can check that under $(\BC^\times)$-action
$z^i \mapsto (\lambda \cdot \bz)^i := z^i \prod_a \lambda_a^{Q_i^a}$
complemented by
\be
 V_a \to V_a-(\log \lambda_a + \log\bar \lambda_a)
\ee
the potential $K_q(p)$ transforms as
\be
 K_q (\bp) \to K_q (\bp) +
 (T^a + \sum_i Q_i^a) (\log \lambda_a + \log\bar \lambda_a)
\ee
This K\"ahler potential only agrees with
the standard K\"ahler potential on the K\"ahler quotient \cite{Hitchin:1986ea}
at leading order in the semi-classical expansion.
The proper geometric and physical meaning of this quantum K\"ahler potential
is not clear to us at the moment.
In this appendix we just wanted to demonstrate that one can introduce
an appropriate finite norm
for the states that are diagonal under $U(1)$ action on non-compact toric spaces.

\section{Conventions for \texorpdfstring{$A_{n-1}$}{An} space}
\label{app-A}

This appendix recalls some definitions for $A_{n-1}$ space,
for a more detailed treatment the reader may consult Appendix A in ref.~\cite{Zotto:2021xah}.
The space $A_{n-1}$ is $\BC^{n+1}//U(1)^{n-1}$.
On $\BC^{n+1}$ we set to zero all $\td\e$'s corresponding to compact divisors and keep only
\be
  \ve_4 = \td\e_1, \quad \ve_5=\td\e_{n+1}
\ee
For $p=1,\ldots,n$, we define equivariant parameters at fixed point $p$
\be
 \ve_4^{(p)} = (n-p + 1) \ve_4 + (1-p) \ve_5, \quad \ve_5^{(p)} = (p-n) \ve_4 + p \ve_5
 \label{An-local-global}
\ee
By DH theorem, the equivariant volume is
\be
 \operatorname{vol} (A_{n-1}) = \sum_{p=1}^n \frac{e^{H^p}}{\ve_4^{(p)} \ve_5^{(p)} }
 \label{defin-equiv-An}
\ee
where $H^p$ is the value of Hamiltonian at fixed point $p$.
One can compute
\be
 H^p= \ve_4 \sum_{k=1}^{n-p} j \td t^k + \ve_5 \sum_{k=1}^p (k-1) \td t^{n-k+1}
 \label{value-Hamil}
\ee
where the $(n-1)$ parameters $\td \bt$ correspond to the values of \cref{def-moment-map},
as described in Appendix A of ref.~\cite{Zotto:2021xah}.
Introduce $\al^p$ such that
\be
 \td t^{n-p}=\al^{p+1} - \al^p
 \label{t-alpha-eqs}
\ee
This map is not invertible unless we add an extra condition.
It is natural to require
\be
 \sum_{p=1}^n \al^p=0
 \label{trace-alpha}
\ee
One can check that \cref{t-alpha-eqs,trace-alpha} provide an invertible map
between $\td \bt$ and $\al$'s.
Using \cref{defin-equiv-An} with the values of $H^p$ in \cref{value-Hamil}
expressed in terms of $\al$'s, we get
\be
 \operatorname{vol} (A_{n-1}) =
 \frac{1}{n\ve_4 \ve_5} - \frac{1}{2} \sum_{p=1}^n (\al^p)^2 +
 \frac{\ve_4+\ve_5}{12} \sum_{p<q} (\al^p - \al^q)^3 +
 \frac{n(\ve_4-\ve_5)}{12} \sum_{p=1}^n (\al^p)^3 + O (\e^2)
 \label{volAn}
\ee
Let us write $H^p$ in terms of $\al$'s
\be
 H^p = - \frac{\ve}{2} \sigma^p (\al) - g_p \al^p
 \label{HtoA}
\ee
where we used \cref{trace-alpha} together with the map
\be
 \sigma^p(\al) := \sum_{s=1}^{p-1} \al^s - \sum_{s=p+1}^n \al^s
\ee
and we defined (from the point of view of this paper)
\be
 \ve := \ve_4+ \ve_5 = \ve_4^{(p)} + \ve_5^{(p)}, \quad \ve_4-\ve_5 =: \frac{2}{n} g
\ee
so that
\be
g_p := g + \frac{\ve}{2} (n+1 - 2p) = \frac{\ve_4^{(p)} - \ve_5^{(p)}}2
\ee

\printbibliography
\end{refsection}


\begin{refsection}



\section{Appendix by Mich\`ele Vergne}

\subsection{Shifts in equivariant integration}

N. Nekrasov, N. Piazzalunga and M. Zabzine  (NPZ) discovered a shift equation for equivariant volumes  of a family of Hamiltonian manifolds $X_\tt$.
The motivating example is:  $X=\C^N$ with standard action of $T_N=U(1)^N$  and   $X_\tt$ the family of non compact toric manifolds arising by reduction  with  respect to an action of a subtorus $T$. Their proof uses an integral representation of the equivariant symplectic volume via Jeffrey-Kirwan residues.
We  outline a proof of this shift equation in a more general context, using equivariant cohomology arguments.
We give a simple example in the last section, making explicit the cohomological arguments used.

We slightly changed notations, and  stick to the notations of \cite{BGV} (Chapter 7) for equivariant cohomology, so let us describe our setting.

Let $T_N=U(1)^N$ be the standard  $N$ dimensional torus, with Lie algebra ${\rm Lie}(T_N)$. We denote by $\epsilon$ an element of ${\rm Lie}(T_N)$.
We write $\epsilon=\sum_{i=1}^N \epsilon_i p^i$ where $\epsilon_i\in \R$ are reals.
Let $X$ be a $T_N$-Hamiltonian manifold, possibly non compact, with symplectic form $\omega$ and moment map $\tilde \mu: X\to {\rm Lie}(T_N)^*$.
The equivariant symplectic form is $\omega(\epsilon)=\omega+\ll \tilde \mu,\epsilon\rr$ and satisfies $D_\epsilon\omega(\epsilon)=0$ where $D_\epsilon=d-\iota(\epsilon_X)$. Here $\iota(\epsilon_X)$ is the contraction by the vector field $\epsilon_X$ associated to the infinitesimal action of $\epsilon$ on $X$.

Fix $T=U(1)^r$,  and  consider an  injective homomorphism $T\to  T_N$.
Denote by  $\Psi: {\rm Lie}(T_N)^*\to {\rm Lie}(T)^*$ the corresponding  surjection. So the moment map $\mu: X\to {\rm Lie}(T)^*$ for the $T$-action is $\Psi\tilde\mu$.
If $\tt$ is a point in ${\rm Lie}(T)^*$, then
$\mu^{-1}(\tt)$ is  stable by the action of $T_N$. If $\tt$ is a regular value of $\mu$, then ${\rm Lie}(T)$ acts infinitesimally freely on
$\mu^{-1}(\tt)$. Thus $X_\tt=\mu^{-1}(\tt)/T$ is an orbifold with a $T_N/T$ action, provided with a symplectic form  $\omega_\tt$. Its (real) dimension $d$ is the even integer  $\dim X-2r$. The equivariant symplectic form of $X_\tt$  is $\omega_\tt(\epsilon)=\omega_\tt+\ll \tilde \mu(m),\epsilon\rr$.
 One  defines $$\CF_\epsilon(\tt)=\frac{1}{(2i\pi)^{d/2}}\int_{X_\tt}e^{i\omega_\tt(\epsilon)}=\frac{1}{(2i\pi)^{d/2}}\int_{X_\tt}e^{i\omega_\tt} e^{i \langle\tilde \mu(m),\epsilon \rr}.$$

   If $\tilde \mu(m)$ growths sufficiently fast at $\infty$, then $\CF_\epsilon(\tt)$ is well defined as a generalized function of $\epsilon$. This is verified in the  case $X=\C^N$ considered by NPZ. When  $X_\tt$ is compact,
 the value at $\epsilon=0$ of $\CF_\epsilon(\tt)$  is the volume of $X_\tt$ for the symplectic form $\omega_\tt/(2\pi)$.
So the generalized function  $\epsilon\mapsto \CF_\epsilon(\tt)$ is called the equivariant volume of $X_\tt$.

Let $\c\subseteq\Lie(T)^*$ be an open connected subset contained in the set of regular values of $\mu$, and let  $X_\c=\mu^{-1}(\c)$.
If  $m\in \Lie(T_N)^*$, it defines a linear function $\ll m,\epsilon \rr$ on $\Lie(T_N)$, thus can be considered as a $T_N$ closed equivariant form on $X_\c$ of equivariant degree $2$.
We say that $m$ is $\mu$-compact if the class of $\ll m,\epsilon\rr $ in equivariant cohomology  is equal to the class of a closed equivariant form  $\Th(m)(\epsilon)$ on $X_\c$ such that $\Th(m)(\epsilon)$ restricted to $\mu^{-1}(\tt)$  ($\tt\in \c$) is compactly supported. We give a simple example in the last section.

Here is the shift equation of NPZ.

\begin{proposition}\label{Shiftequation}
 Let $m\in \Lie(T_N)^*$ be $\mu$-compact. Let $\tt\in \c$ be a regular value of $\mu$ and assume that $m$  is sufficiently small so that $\tt+s m$ is a regular value of $\mu$ for $s\in [0,1]$.
 Then
$$\CF_\epsilon(\tt)-e^{-i \ll m,\epsilon\rr }\CF_\epsilon(\tt+\Psi(m))$$  is an analytic   function of $\epsilon$.

\end{proposition}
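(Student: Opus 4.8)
The plan is to express $e^{-i\ll m,\epsilon\rr}\CF_\epsilon(\tt+\Psi(m))$ as an integral over the \emph{fixed} reduced space $X_\tt$ of the equivariantly closed form $e^{i\omega_\tt(\epsilon)}$ wedged with a correction factor, and then to use $\mu$-compactness to see that this correction differs from $1$ by a \emph{compactly supported} equivariant form. Since the equivariant integral over $X_\tt$ of a compactly supported form is an honest analytic (indeed entire) function of $\epsilon$ — all the poles that make $\CF_\epsilon(\tt)$ only a generalized function originate from the non-compact directions — the difference in the Proposition is then visibly analytic.

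\textbf{Steps 1--2: transport the family onto $X_\tt$ and apply the equivariant Duistermaat--Heckman variation.} Over a neighbourhood of the segment $\{\tt+s\Psi(m):s\in[0,1]\}$ of regular values, I would choose a $T$-invariant connection on the (orbifold) principal $T$-bundle $\mu^{-1}(\cdot)\to\mu^{-1}(\cdot)/T$; the Guillemin--Sternberg minimal-coupling construction then produces, for each $s$, an equivariantly closed representative of $\omega_{\tt+s\Psi(m)}(\epsilon)$ depending smoothly on $s$, and flowing along a lift of $\partial_s$ identifies all $X_{\tt+s\Psi(m)}$ with $X_\tt$. Under this identification one obtains an identity of the form
\[
\CF_\epsilon(\tt+\Psi(m))\;=\;\frac{e^{i\ll m,\epsilon\rr}}{(2i\pi)^{d/2}}\int_{X_\tt}e^{i\omega_\tt(\epsilon)}\wedge e^{\,i\,\bar\Th(m)(\epsilon)},
\]
where $\bar\Th(m)(\epsilon)$ is the reduction to $X_\tt$ of the equivariant form $\Th(m)(\epsilon)$ on $X_\c$, and the scalar $e^{i\ll m,\epsilon\rr}$ is exactly the constant part of the shift $\tilde\mu\mapsto\tilde\mu+m$ of the moment map between the two levels, which factors out of the integral. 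This is the heart of the argument: it amounts to Stokes' theorem over $\mu^{-1}(\{\tt+s\Psi(m)\})/T$, with the linear dependence of $[\omega_{\tt+s\Psi(m)}]$ on $s$ producing $\bar\Th(m)(\epsilon)$, since the class $[\Th(m)(\epsilon)]$ is by hypothesis what represents the variation direction $\Psi(m)$.

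\textbf{Step 3: conclude.} Because $m$ is $\mu$-compact, $\Th(m)(\epsilon)$ may be taken to restrict to a compactly supported $2$-form on each level set, so $\bar\Th(m)(\epsilon)$ is compactly supported on $X_\tt$; hence every term of $e^{i\bar\Th(m)(\epsilon)}-1$ contains a factor of $\bar\Th(m)(\epsilon)$ and is compactly supported. Subtracting the identity above from $\CF_\epsilon(\tt)=\tfrac{1}{(2i\pi)^{d/2}}\int_{X_\tt}e^{i\omega_\tt(\epsilon)}$ gives
\[
\CF_\epsilon(\tt)-e^{-i\ll m,\epsilon\rr}\CF_\epsilon(\tt+\Psi(m))\;=\;-\frac{1}{(2i\pi)^{d/2}}\int_{X_\tt}e^{i\omega_\tt(\epsilon)}\wedge\bigl(e^{i\bar\Th(m)(\epsilon)}-1\bigr),
\]
an integral of an equivariant form with compact support and coefficients entire in $\epsilon$ (the only transcendental ingredient being $e^{i\ll\tilde\mu,\epsilon\rr}$, which is harmless on a compact region); differentiating under the integral over the compact support shows the right-hand side is entire, in particular analytic. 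Note that each term on the left is only a generalized function of $\epsilon$, while the combination is regular — which is the whole point.

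\textbf{Main obstacle.} The delicate step is the equivariant DH variation identity on \emph{non-compact} reduced spaces: the minimal-coupling computation produces $D_\epsilon$-exact correction terms, and one must verify that these integrate to zero, i.e.\ that there is no boundary contribution ``at infinity'' — precisely where the hypothesis that $\tilde\mu$ grows fast at infinity (so that $\CF_\epsilon$ makes sense, after analytic continuation from the cone in which all integrals converge absolutely) is used. A clean way to organize this is to descend $\omega(\epsilon)$ and $\Th(m)(\epsilon)$ through the $T$-quotient on $X_\c$ directly and pair with the reduction map level by level, which is presumably what the worked example in the last section is meant to make explicit.
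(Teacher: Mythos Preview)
Your strategy is essentially the paper's: identify all reduced spaces with a fixed $X_{\tt_1}$ via a connection, use the equivariant Duistermaat--Heckman linear variation to write $\CF_\epsilon(\tt+\Psi(m))$ as an integral over $X_{\tt_1}$, and then exploit $\mu$-compactness to get a compactly supported integrand. The paper packages the last step slightly differently: instead of replacing the cohomology class inside the exponential, it sets $\CF(s,\epsilon,\tt)=e^{-is\ll m,\epsilon\rr}\CF_\epsilon(\tt+s\Psi(m))$, computes
\[
\frac{d}{ds}\CF(s,\epsilon,\tt)=\frac{-i}{(2i\pi)^{d/2}}\int_{X_1} W_\alpha(m)(\epsilon)\, e^{i(\omega_1(\epsilon)-\ll\tt-\tt_1,R_\alpha(\epsilon)\rr)}e^{-isW_\alpha(m)(\epsilon)},
\]
replaces the \emph{linear} factor $W_\alpha(m)(\epsilon)$ by its compactly supported representative $\Th_\alpha(m)(\epsilon)$, and then integrates in $s$ over $[0,1]$. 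This buys a cleaner justification of the boundary term at infinity: only one $D_\epsilon$-exact correction appears, multiplied against the rapidly decreasing factor $e^{i\omega_1(\epsilon)}$.

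Your Step~2 elides a step. What the DH variation actually produces is
\[
e^{-i\ll m,\epsilon\rr}\CF_\epsilon(\tt+\Psi(m))=\frac{1}{(2i\pi)^{d/2}}\int_{X_\tt} e^{i\omega_\tt(\epsilon)}\, e^{-iW_\alpha(m)(\epsilon)},
\]
with $W_\alpha(m)(\epsilon)=\ll m,\epsilon\rr+\ll\Psi(m),R_\alpha(\epsilon)\rr$ the Chern--Weil image of the constant form $\ll m,\epsilon\rr$; note the sign is $-i$, not $+i$. Passing from $W_\alpha(m)$ to the compactly supported $\Th_\alpha(m)$ \emph{inside the exponential} is itself a cohomology replacement on a non-compact space and needs exactly the boundary-at-infinity argument you flag in your ``Main obstacle''. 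So your displayed identity with $\bar\Th(m)$ in the exponent is really two steps, the second of which is not free. Once that is acknowledged, your final formula
\[
\CF_\epsilon(\tt)-e^{-i\ll m,\epsilon\rr}\CF_\epsilon(\tt+\Psi(m))=\frac{1}{(2i\pi)^{d/2}}\int_{X_\tt} e^{i\omega_\tt(\epsilon)}\bigl(1-e^{-i\Th_\alpha(m)(\epsilon)}\bigr)
\]
is correct and equivalent to the paper's integrated-in-$s$ formula; the value at $\epsilon=0$ agrees with the paper's since $\int_0^1 e^{-isW_\alpha(m)(0)}ds=(1-e^{-iW_\alpha(m)(0)})/(iW_\alpha(m)(0))$.
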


In the following, we  prove this statement by giving a compactly supported  integral formula for
$$N_\epsilon(\tt,m)=\CF_\epsilon(\tt)-e^{-i \ll m,\epsilon\rr }\CF_\epsilon(\tt+\Psi(m)).$$

Consider  $X_\c=\mu^{-1}(\c)$. It  is provided with a locally free action of $T$.
A basic (with respect to $T$) equivariant form  on $X_\c$  is a $T_N$-equivariant form $\nu(\epsilon)$
depending only of $\epsilon$ modulo ${\rm Lie}(T)$
($\nu(\epsilon+u)=\nu(\epsilon)$ if $u\in {\rm Lie}(T)$), and  such that $\iota(u_X)\nu(\epsilon)=0$ for any $u\in {\rm Lie}(T)$.
If $T$ acts freely, basic forms are pull back of  $T_N/T$-equivariant forms on  $X_\c/T$.
We still say that basic forms are $T_N/T$-equivariant forms on $X_\c/T$ when the action is only infinitesimally free.
The ring of $T$-basic equivariant forms is stable by the equivariant differential $D_\epsilon$ and
the pull back map induces an isomorphism  $H^*_{T_N/T}(X_\c/T)\to H^*_{T_N}(X_\c)$ of the equivariant cohomology rings
(this is essentially  due to H. Cartan \cite{Cartan}).
For integration  on non compact spaces, we need a description in terms of  equivariant forms.
Choose a $T_N$ invariant connection form $\alpha\in \Omega^1(X_\c)\otimes \Lie(T)$
on $X_\c$ for the infinitesimally  free  action of ${\rm Lie}(T)$. So $\alpha(\epsilon_X)=\epsilon$ if $\epsilon\in \Lie(T)$.
Then there  is an homomorphism   $W_\alpha$  from  the ring of  $T_N$- equivariant forms on $X_\c$   to $T_N/T$-equivariant  forms on $X_\c/T$ commuting
with the equivariant differential.
 The map $W_\alpha$ is a generalization of the Chern-Weil homomorphism (see Lemma 23 in \cite{duflo-vergne}, or Proposition 85 in \cite{kumar-vergne}).
It is described as follows. Let $R_\alpha=d\alpha$ be the curvature of $\alpha$, and
 $R_\alpha(\epsilon)=d\alpha-\langle \alpha,\epsilon_X\rr $ be its equivariant curvature (an element of $\Lie(T)$  with coefficient differential forms on $X_\c$).
The homomorphism $W_\alpha$ consists in  replacing  in an equivariant differential form  the variable $\epsilon\in \Lie(T_N)$ by  $\epsilon+R_\alpha(\epsilon)$,
 and project the resulting differential form  on horizontal forms.

In particular,  $$W_\alpha(m)(\epsilon) =\ll m,\epsilon\rr +\ll \Psi(m),R_\alpha(\epsilon)\rr$$ is closed and basic.
 Indeed $ W_\alpha(m)(\epsilon+u)=W_\alpha(m)(\epsilon)$ when $u\in \Lie(T)$  since $R_\alpha(u)=R_\alpha-u$ for $u\in \Lie(T)$. For $\epsilon=0$, $W_\alpha(m)(0)=\ll\Psi(m),R_\alpha\rr.$
  If $m$ is $\mu$-compact and equivalent to $\Th(m)(\epsilon)$, define $\Th_\alpha(m)(\epsilon)$ to be the image by $W_\alpha$ of $\Th(m)(\epsilon)$.
   Then $\Th_\alpha(m)(\epsilon)$ restricted to $X_\tt$ is a closed equivariant form which is compactly supported and $\Th_\alpha(m)(0)$ is a compactly supported closed two-form on $X_\tt$.

Let $\tt_1\in \c$ and let $X_1=\mu^{-1}(\tt_1)/T$, with symplectic form $\omega_1$. 
We denote by $\omega_1(\epsilon)$ the equivariant symplectic form of $X_1$. So $\omega_1=\omega_1(0)$.

\begin{proposition}\label{explicit}
If $\tt$ is near $\tt_1$ and  $m$ is  small, then we have the compactly supported integral representation of $N_\epsilon(\tt,m)$:
$$N_\epsilon(\tt,m)=\frac{i}{(2i\pi)^{d/2}}\int_{X_1} \int_{s=0}^1\Th_\alpha(m)(\epsilon) e^{i (\omega_1(\epsilon)- \ll \tt-\tt_1,R_\alpha(\epsilon)\rr)} e^{-is W_\alpha(m)(\epsilon)} ds.$$

The value of $N_\epsilon(\tt,m)$ at $\epsilon=0$ is
$$\frac{1}{(2\pi)^{d/2}} \sum_{j>0,k\geq 0; j+k=d/2} \int_{X_1} (-1)^j \frac{( \omega_1-\ll\tt-\tt_1,R_\alpha\rr)^k}{k!}\frac{\Th_\alpha(m)(0)^j}{j!}.$$

\end{proposition}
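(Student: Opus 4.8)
The plan is to reduce everything to the fixed model $X_1=X_{\tt_1}$ by equivariant Duistermaat--Heckman, to express $N_\epsilon(\tt,m)$ as a one--parameter integral along the segment $s\mapsto\tt+s\Psi(m)$ through the fundamental theorem of calculus, and then to invoke $\mu$-compactness in order to trade the globally defined but non--compactly--supported form $W_\alpha(m)(\epsilon)$ for the cohomologous $\Th_\alpha(m)(\epsilon)$, whose restriction to $X_1$ is compactly supported. The first ingredient I would establish is a consequence of the minimal--coupling construction: $W_\alpha(\omega(\epsilon))$ is a single closed $T_N/T$-equivariant form on $Y=X_\c/T$, independent of $\tt$, whose restriction to the fibre $X_\tt$ equals $\omega_\tt(\epsilon)+\ll\tt,R_\alpha(\epsilon)\rr$ modulo a $D_\epsilon$-exact term, the affine--in--$\tt$ piece being the curvature term $\ll\mu,R_\alpha(\epsilon)\rr$ produced by $W_\alpha$. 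Since $Y$ retracts onto $X_1$ and $W_\alpha(\omega(\epsilon))$ does not depend on $\tt$, transporting every $X_\tt$ to $X_1$ gives $\omega_\tt(\epsilon)=\omega_1(\epsilon)-\ll\tt-\tt_1,R_\alpha(\epsilon)\rr$ modulo $D_\epsilon$-exact, hence, using that the (oscillatory) integral of a closed form against a $D_\epsilon$-exact form vanishes,
\[
\CF_\epsilon(\tt)=\frac{1}{(2i\pi)^{d/2}}\int_{X_1}e^{i(\omega_1(\epsilon)-\ll\tt-\tt_1,R_\alpha(\epsilon)\rr)}.
\]

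Next I would set $h(s)=e^{-is\ll m,\epsilon\rr}\CF_\epsilon(\tt+s\Psi(m))$, noting that the hypotheses ``$\tt$ near $\tt_1$, $m$ small'' keep $\tt+s\Psi(m)\in\c$ for $s\in[0,1]$, so that $N_\epsilon(\tt,m)=h(0)-h(1)=-\int_0^1h'(s)\,ds$. Differentiating the formula above in the direction $\Psi(m)$ and combining the two resulting terms of $h'(s)$ via $W_\alpha(m)(\epsilon)=\ll m,\epsilon\rr+\ll\Psi(m),R_\alpha(\epsilon)\rr$ gives
\[
h'(s)=\frac{-i}{(2i\pi)^{d/2}}\,e^{-is\ll m,\epsilon\rr}\int_{X_{\tt+s\Psi(m)}}W_\alpha(m)(\epsilon)\,e^{i\omega(\epsilon)}.
\]
Here $\mu$-compactness enters: $W_\alpha(m)(\epsilon)=\Th_\alpha(m)(\epsilon)+D_\epsilon\lambda(\epsilon)$ on each $X_{\tt'}$, and since $e^{i\omega_{\tt'}(\epsilon)}$ is closed the two integrands differ by $D_\epsilon(\lambda(\epsilon)\,e^{i\omega_{\tt'}(\epsilon)})$, so Stokes allows $W_\alpha(m)(\epsilon)$ to be replaced by $\Th_\alpha(m)(\epsilon)$. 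Integrating $-\int_0^1h'(s)\,ds$ and transporting each $X_{\tt+s\Psi(m)}$ back to $X_1$ — which converts $e^{-is\ll m,\epsilon\rr}e^{i\omega_{\tt+s\Psi(m)}(\epsilon)}$ into $e^{i(\omega_1(\epsilon)-\ll\tt-\tt_1,R_\alpha(\epsilon)\rr)}e^{-isW_\alpha(m)(\epsilon)}$ — yields precisely the asserted formula; since $\Th_\alpha(m)(\epsilon)$ is compactly supported on $X_1$ the integral is effectively over a compact set, so $N_\epsilon(\tt,m)$ is analytic in $\epsilon$, which is Proposition~\ref{Shiftequation}.

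For the value at $\epsilon=0$ I would set $\epsilon=0$ in the formula just obtained; now the whole integrand carries the compactly supported factor $\Th_\alpha(m)(0)$, so by compact--support Stokes one may replace $W_\alpha(m)(0)$ in the exponential by the cohomologous $\Th_\alpha(m)(0)$, after which the $s$-integral collapses, $i\int_0^1\Th_\alpha(m)(0)\,e^{-is\Th_\alpha(m)(0)}\,ds=1-e^{-i\Th_\alpha(m)(0)}$, giving
\[
N_0(\tt,m)=\frac{1}{(2i\pi)^{d/2}}\int_{X_1}\bigl(1-e^{-i\Th_\alpha(m)(0)}\bigr)\,e^{i(\omega_1-\ll\tt-\tt_1,R_\alpha\rr)}.
\]
Expanding both exponentials, keeping the degree-$d$ component (the terms with $j>0$, $k\ge0$, $j+k=d/2$), and using $i^{d/2}/(2i\pi)^{d/2}=(2\pi)^{-d/2}$ reproduces the stated sum.

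The hard part, I expect, will be the analytic justification of the ``oscillatory Stokes'' invoked in the first two steps — that $\int_{X_{\tt'}}(\text{closed equivariant form})\,D_\epsilon(\beta)=0$ when $\beta$ is not compactly supported but the integral is read as a generalized function of $\epsilon$, equivalently that one may freely pass to cohomologous representatives in such integrals, the necessary decay being supplied by the oscillation $e^{i\ll\tilde\mu,\epsilon\rr}$. That, together with verifying the minimal--coupling identity $W_\alpha(\omega(\epsilon))|_{X_\tt}=\omega_\tt(\epsilon)+\ll\tt,R_\alpha(\epsilon)\rr$ modulo $D_\epsilon$-exact with the correct normalization, is where the subtlety sits; the rest is the Duhamel/fundamental--theorem bookkeeping described above.
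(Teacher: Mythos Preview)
Your proposal is correct and follows essentially the same route as the paper's proof: both reduce to the fixed model $X_1$ via the Duistermaat--Heckman linearity implemented through the Chern--Weil map $W_\alpha$, introduce the interpolation $s\mapsto e^{-is\ll m,\epsilon\rr}\CF_\epsilon(\tt+s\Psi(m))$, differentiate to bring down $W_\alpha(m)(\epsilon)$, invoke $\mu$-compactness to replace it by $\Th_\alpha(m)(\epsilon)$, and then evaluate at $\epsilon=0$ by expanding the exponentials. The only cosmetic differences are ordering (the paper works on $X_1$ throughout before differentiating, while you transport afterwards; the paper performs the $s$-integral before replacing $W_\alpha(m)(0)$ by $\Th_\alpha(m)(0)$, while you replace first), and your correctly flagged analytic point---that $D_\epsilon$-exact terms integrate to zero distributionally---is exactly the step the paper invokes with the phrase ``integrals of $D_\epsilon$-exact equivariant forms integrated against test functions vanish.''
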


\begin{proof}
Let us sketch a proof of  the above proposition.
We give an example at the end in the simplest case $A_1$, in order to  see how arguments on convergence may be justified.

We work in a neighborhood  of the regular value $\tt_1$ of $\mu$, so that all the spaces $X_\tt$ are isomorphic to the same manifold $X_1$.
Let  $P=\mu^{-1}(\tt_1)$.
So we may assume that $X=P\times {\rm \Lie }(T)^*$, where $P$ is provided with a locally free action of $T$ (and still acted by the larger torus $T_N$),  the moment map $\mu$ being the second projection.

Let $p_1: X\to P$ be the first projection. We choose a $T_N$-invariant connection form  $\alpha$ for the action of $T$ on $P$ and still denote by $\alpha$ the reciproc image of $\alpha$ on $X$.
Let us analyze the $T_N$-equivariant symplectic form $\omega(\epsilon)$ in this system of coordinates $P\times {\rm \Lie }(T)^*$.

\begin{lemma}
There exists a $T$-basic  form $\nu\in \Omega^1(X)$, invariant by $T_N$,  such that  for $\epsilon\in \Lie(T_N)$:
$$\omega(\epsilon)= p_1^* \omega_1(\epsilon)-\ll \tt-\tt_1,R_\alpha(\epsilon)\rr -\ll d\tt,\alpha\rr+ D_\epsilon \nu.$$
\end{lemma}
\begin{proof}
The  form $\omega_\alpha(\epsilon)=\ll \tt-\tt_1,R_\alpha(\epsilon)\rr +\ll d\tt,\alpha\rr$ is equivarianty closed since this is $D_\epsilon \ll\tt-\tt_1,\alpha\rr$.
Similarly $p_1^*\omega_1(\epsilon)$ is equivariantly closed.
Now $\omega(\epsilon)-(p_1^*\omega_1(\epsilon)-\omega_\alpha(\epsilon))$ restricts to $P$ by zero.
Using the standard homotopy for the contraction  $p_1: P\times {\rm \Lie }(T)^*\to P$,
we see that there exists  a $T_N$ invariant  $1$- form $\nu$  on $P\times {\rm \Lie }(T)^*$
such that $$\omega(\epsilon)=p_1^* \omega_1(\epsilon)-(\ll \tt-\tt_1,R_\alpha(\epsilon)\rr +\ll d\tt,\alpha\rr)
 + D_\epsilon \nu.$$
Let us see that  $\nu$ is basic for the action of $T$, that is $\nu(\epsilon_X)=0$ if $\epsilon\in \Lie(T)$.
 Compute the  function term  (that is the  degree $0$ term with respect to the degree of differential forms)  in the equation above
 when $\epsilon\in \Lie(T)$.
 The degree $0$ term of  $D_\epsilon \nu$  is $-\ll\nu,\epsilon_X\rr$.
 The degree $0$ term of $\omega(\epsilon)-p_1^*\omega_1(\epsilon)$ is $\ll \tt-\tt_1,\epsilon\rr$ and the degree $0$ term of  $\ll \tt-\tt_1,R_\alpha(\epsilon)\rr$ is  $-\ll \tt-\tt_1,\epsilon \rr$. So  $\ll\nu,\epsilon_X\rr=0$.
\end{proof}

Identify $X_\tt$  with  $X_1$ by the first projection. From the lemma above, we obtain the following corollary.

\begin{corollary}
On $X_1$, we have the equation
$$\omega_\tt(\epsilon)=\omega_1(\epsilon)-\ll \tt-\tt_1,R_\alpha(\epsilon)\rr+D_\epsilon \nu_\tt$$
where $\nu_\tt$ is the one form on  $X_1=P/T$ deduced from the basic one form $\nu| P\times\{\tt\}$ by quotient.
\end{corollary}
 So we see that the cohomology class of $\omega_\tt(\epsilon)$  varies linearly with $\tt$.
This is the content (when $T=T_N$) of  Duistermaat-Heckman theorem \cite{DH}.

Using the fact  that integrals of $D_\epsilon$-exact equivariant forms integrated against test functions vanishes, we conclude that

$$\CF_\epsilon(\tt)=\frac{1}{(2i\pi)^{d/2}}\int_{X_1} e^{i \omega_\tt(\epsilon)}=\frac{1}{(2i\pi)^{d/2}}\int_{X_1} e^{i \omega_1(\epsilon)} e^{-i  \ll \tt-\tt_1,R_\alpha(\epsilon)\rr}.$$

Define $$\CF(s,\epsilon,\tt)=e^{-i s \ll m, \epsilon\rr}\CF_\epsilon(\tt+s \Psi(m))$$ for $s\in [0,1]$.
So $\CF(0,\epsilon,\tt)-\CF(1,\epsilon,\tt)=N_{\epsilon}(\tt,m)$ is the shift we want to compute.

We relate it to the generalized Chern Weil homomorphism. Indeed
$$\CF(s,\epsilon,\tt)=\frac{1}{(2i\pi)^{d/2}}\int_{X_1} e^{i (\omega_1(\epsilon)- \ll\tt-\tt_1,R_\alpha(\epsilon)\rr)}
 e^{-i s \ll m,\epsilon\rr}e^{-i  s \ll\Psi(m),R_\alpha(\epsilon)\rr}$$
 $$=\frac{1}{(2i\pi)^{d/2}}\int_{X_1} e^{i (\omega_1(\epsilon)- \ll\tt-\tt_1,R_\alpha(\epsilon)\rr)}
 e^{-i s W_\alpha(m)(\epsilon)}$$
 since  $W_\alpha(m)(\epsilon)=\ll m,\epsilon\rr + \ll \Psi(m),R_\alpha(\epsilon)\rr$.
So we obtain that $\frac{d}{ds}\CF(s,\epsilon,\tt)$ is equal to

\begin{equation}\label{doverds}
\frac{-i}{ (2i\pi)^{d/2}}\int_{X_1}  W_\alpha(m)(\epsilon) e^{i (\omega_1(\epsilon)-\ll \tt-\tt_1,R_\alpha(\epsilon)\rr)}
e^{-i s W_\alpha(m)(\epsilon)}.
\end{equation}
This equation holds true without any hypothesis on $m$.

\bigskip

Now assume  that  $m$ is $\mu$-compact.
So we can replace $W_\alpha(m)(\epsilon)$ by $\Th_\alpha(m)(\epsilon)$.
Integrating in $s$, we obtain the first assertion of Proposition \ref{explicit}.

 Let us compute the value at $\epsilon=0$. Then $W_\alpha(m)(0)=\ll\Psi(m),R_\alpha\rr$ is a differential form of degree $2$.
So $$\int_{s=0}^1e^{-i s W_\alpha(m)(0)}=\frac{e^{-i  W_\alpha(m)(0)}-1}{-iW_\alpha(m)(0)}=\sum_{j=0}^{d/2} (-i)^{j}\frac{W_\alpha(m)(0)^j}{(j+1)!}.$$

We can replace $W_\alpha(m)(0)$  by $\Th_\alpha(m)(0)$ which is equivalent in cohomology. We then obtain the formula for
the value at $\epsilon=0$.
\end{proof}

Assume that $m_1,m_2,\ldots,m_J$ is a sequence of $\mu$-compact elements, then $m=\sum_{k=1}^J s_k m_k$ is $\mu$-compact.
 If $\tt$ is sufficiently near $\tt_1$ and $s_k$ are sufficiently small, then
 $$N_\epsilon(\tt, \sum_k s_k m_k)=\CF_\epsilon(\tt)-e^{-i\sum_{k} s_k \ll m_k,\epsilon\rr } \CF_\epsilon(\tt+\sum s_k \Psi(m_k))$$
 is an analytic function of $\epsilon$. We see that from the formula above for $m=\sum_k s_k m_k$
 that the  value $N_\epsilon(\tt, \sum s_k m_k))$ at $\epsilon=0$ is a polynomial in $\tt$ and $s_k$ of total degree $d/2$.
It involves products of the forms $\Th_\alpha(m_k)$. In the case considered by the authors, the $m_k, k=1,\ldots, J$ corresponds to compact divisors $D_k$
of the non compact toric manifold $X_1$, and therefore their product are obtained by computing various cohomological intersections
 of the compact divisors $D_k$.

Consider a product of  linear forms $\prod_{i\in I}m_i$. It can happen that the closed equivariant form $\prod_{i\in I}\ll m_i,\epsilon\rr $ is  cohomologous  on $X_\c$ to a  form $\Th_I(\epsilon)$ such that $\Th_I(\epsilon)$ restricted to $\mu^{-1}(\tt)$ is compactly supported, but
each individual $m_i$ itself non compact (see Example 6.1 of the NPZ article).
Using the formula $$\prod_{i\in I} (e^{x_i}-1)=\sum_{J\subset I} (-1)^{|J]} e^{\sum_{j\in J}x_j}-1=(\prod_{i\in I} x_i) \prod_{i\in I} \frac{e^{x_i}-1}{x_i},$$ one obtains  the generalized shift equation of NPZ when  $\prod_{i\in I}\ll m_i,\epsilon\rr $ is $\mu$-compact:
$$N_{\epsilon}(\tt, I)= \sum_{J\subset I} (-1)^{|J|} e^{i\sum_{j\in J} \ll m_j,\epsilon\rr}\CF_\epsilon(\tt+\sum_{j\in J} \Psi(m_j))$$
is analytic in $\epsilon$.

\subsection{A simple example}

We reconsider the example 5.1 ($A_1$-space) of the NPZ article.

Let $X=\C^3$ with standard action of $T_3=U(1)^3$.
We write $\epsilon\in {\rm \Lie }(T_3)$ as $\epsilon_1 p^1+\epsilon_2 p^2+\epsilon_3 p^3$,  $p^i_X$ being the  vector field $y_i \partial_{x_i}-x_i \partial_{y_i}$.
The equivariant symplectic form is
$$\omega(\epsilon)=(dx_1\wedge dy_1+dx_2\wedge dy_2+dx_3\wedge dy_3)+\frac{1}{2}(\epsilon_1 |z_1|^2+\epsilon_2 |z_2|^2+\epsilon_3 |z_3|^2).$$
If we integrate  (in $\epsilon$) the function $e^{i(\epsilon_1 |z_1|^2+\epsilon_2 |z_2|^2+\epsilon_3 |z_3|^2)/2}$  against a smooth compactly supported  function of  $\epsilon$, the result is a rapidly decreasing function in $z_1,z_2,z_3$. So
$\frac{1}{(2i \pi)^3}\int_{\C^3}e^{i\omega(\epsilon)}$ is well defined as a generalized function of $\epsilon\in \Lie(T_3)$. It is the Fourier transform of the characteristic function of the cone in $\Lie(T_3)^*$ generated by
$p_1,p_2,p_3$. Outside the hyperplanes $\epsilon_i=0$, it is given by the rational function $\frac{1}{\epsilon_1\epsilon_2\epsilon_3}$.

Let $T=U(1)$ and let $T\to T_3$ be the homomorphism associated to $$Q=\left(
                                                             \begin{array}{ccc}
                                                               1 & -2 & 1 \\
                                                             \end{array}
                                                           \right).$$
We denote by $J$ the generator of ${\rm \Lie }(T)$, with
 $J_X=(y_1 \partial_{x_1}-x_1 \partial_{y_1})-2  (y_2 \partial_{x_2}-x_2 \partial_{y_2})+ (y_3 \partial_{x_3}-x_3 \partial_{y_3})$.
The moment map $\mu: X\to  \Lie(T)^*$ for the action of $T$ is
$\mu(z_1,z_2,z_3)= \frac{1}{2}(|z_1|^2- 2|z_2|^2+|z_3|^2) J^*$
with $\ll J^*,J\rr=1$.

Let $t>0$ and let $$P_t=\{(z_1,z_2,z_3)\in \C^3;\frac{1}{2}(|z_1|^2- 2|z_2|^2+|z_3|^2)=t \}$$ which is not compact, since $z_2$ can be arbitrary large.
Then $X_{t}=P_t/T$ is a smooth non compact  manifold  of real dimension $4$ (the total space of $\CO(-2)\to \P_1(\C)$).

%
%
Let $\omega_t(\epsilon)$  be the equivariant symplectic form  of $X_t$  obtained from $\omega(\epsilon)$ by restriction and quotient.
Then $$\CF_\epsilon(t J^*)=\frac{1}{(2i \pi)^2}\int_{X_t}e^{i \omega_t(\epsilon)}$$ is a generalized function of $\epsilon$. It is  the Fourier transform of the characteristic function of the (non compact) polyhedron $C_t=\{p_1-2 p_2+p_3=t, p_1\geq 0,p_2\geq 0,p_3\geq 0\}$   with $p_i=\frac{|z_i|^2}{2}$. Outside the hyperplanes $\epsilon_1-\epsilon_3=0, 2\epsilon_1+\epsilon_2=0, 2\epsilon_3+\epsilon_2=0$ it is given by
$${\frac {{{\rm e}^{it\epsilon_1}}}{ \left( 2\epsilon_1+\epsilon_2
 \right)  \left( \epsilon_1-\epsilon_3 \right) }}+{\frac {{{\rm e}^{it
\epsilon_3}}}{ \left( 2\epsilon_3+\epsilon_2 \right)  \left( \epsilon_3-
\epsilon_1 \right) }}.$$

We will see that the linear form $p_2(\epsilon)=\epsilon_2$ is $\mu$-compact, reflecting the fact that $C_t\cap \{p_2=0\}= \{p_2=0, p_1+p_3=t, p_1\geq 0,p_3\geq 0\}$ is compact. We have $\Psi(-m p_2)=2m J^*$.
The shift equation of  NPZ  (for $-p_2$) says   that, when $t>0$ and $t+2m>0$,
 $$N_\epsilon(t)=\CF_\epsilon(t J^*)-e^{i m\epsilon_2}\CF_\epsilon((t+2 m)J^*)$$
 is an analytic function of $\epsilon_1,\epsilon_2,\epsilon_3$ with value at $0$ equal to $m(t+m)$.
Of course in this example, everything can be computed directly. However we sketch below a complicated  proof, but it   clarifies  the local equivariant cohomology arguments given in the proof of Proposition \ref{Shiftequation}.

{\bf First}. We identify all the manifolds $X_t$, for $t>0$, to the same manifold $X_1=P_1/T$ using homothety, and denote still by $\omega_t(\epsilon)$ the corresponding equivariant symplectic form  on $X_1$.
Then $\omega_t(\epsilon)=t\omega_1(\epsilon)$ is already in the form provided by Duistermaat-Heckman theorem (linear dependance in $t$).
So $\CF_\epsilon(t J^*)$ is equal to $$\frac{1}{(2 i\pi)^{2}}
\int_{X_1}   e^{i  t\omega_1(\epsilon)}.$$

Define $$\CF(s,\epsilon,t)=\frac{1}{(2 i \pi)^{2}}\int_{X_1} e^{i s\,m\epsilon_2}e^{i (t+2 ms)\omega_1(\epsilon)}.$$
 Here $s$ vary between $0$ and $1$.
 Thus $N_\epsilon(t)= \CF(0,\epsilon,t)-\CF(1,\epsilon,t)$.

{\bf Second}. Our next step is to compute $\frac{d}{ds} \CF(s,\epsilon,t)$ using the generalized Chern Weil homomorphism $W_\alpha$
associated to  a connection form on $X_1$.

Let $$\alpha=-\frac{1}{2}(x_1 dy_1-y_1 dx_1+x_2 dy_2-y_2 dx_2+x_3 dy_3-y_3 dx_3) J,$$
which restricts to $P_1$ as a connection form with value in $\Lie(T)=\R J$.
Let $R_\alpha$ be its curvature, and $R_\alpha(\epsilon)=d\alpha-\ll \alpha,\epsilon_X\rr $ be its equivariant curvature with value in $\Lie(T)$.
The equivariant curvature $R_\alpha(\epsilon)$ of  $\alpha$ restricted to $P_1$ is
$-\omega_1(\epsilon)J$ with $$\omega_1(\epsilon)=\omega_1+\frac{1}{2}(\epsilon_1 |z_1|^2+\epsilon_2 |z_2|^2+\epsilon_3 |z_3|^2).$$

The Chern Weil homomorphism $W_\alpha$ consists in  replacing  in an equivariant  form the variable $\epsilon=\sum\epsilon_i p^i$ by
$$\epsilon+R_\alpha(\epsilon)=(\epsilon_1 -\omega_1(\epsilon)) p^1+ (\epsilon_2 +2 \omega_1(\epsilon)) p^2+
(\epsilon_3 -\omega_1(\epsilon)) p^3$$ and project the result on basic forms.
For example, consider  $\ll p_2,\epsilon\rr=\epsilon_2$  as a closed  equivariant form on $X=\C^3$ .
Then $$W_\alpha(p_2)(\epsilon)=\epsilon_2+ 2 \omega_1(\epsilon),$$ a closed equivariant form on $\C^3$.
We have  $W_\alpha(p_2)(\epsilon)=(\epsilon_1+\epsilon_2/2)|z_1|^2+
(\epsilon_3+\epsilon_2/2)|z_3|^2+2\omega_1$ on $P_1$. It depends only of $\epsilon$ modulo ${\rm \Lie }(T)$), and it is already horizontal.
So it  defines an $T_N/T$-equivariant closed form on $X_1$.
Thus, we can rewrite
$$\CF(s,\epsilon,t)=\frac{1}{(2 i \pi)^{2}}\int_{X_1} e^{i s\,m\epsilon_2}e^{i (t+2 ms)\omega_1(\epsilon)}=\frac{1}{(2 i \pi)^{2}}\int_{X_1} e^{i t\omega_1(\epsilon)}e^{i ms W_\alpha(p_2)(\epsilon)}.$$
Equation \ref{doverds} says that as a generalized function of $\epsilon$, we have
$$\frac{d}{ds}\CF(s,\epsilon,t)=\frac{im}{(2 i\pi)^{2}}\int_{X_1}  W_\alpha(p_2)(\epsilon) e^{i t\omega_1(\epsilon)}
e^{i ms W_\alpha(p_2)(\epsilon)}$$
which is readily verified.

{\bf Third}.
We now prove and use the fact that $p_2$ is $\mu$-compact.
This will allow us to give  a convergent integral expression for $$N_\epsilon(t)=-\int_{s=0}^1 \frac{d}{ds}\CF(s,\epsilon, t).$$

Let $A(x)$ be a smooth function on $\R$, supported in a small neighborhood if $0$, and equal to $1$ at $0$.
Let $A'$ its derivative. We consider $$P_2(\epsilon):=\epsilon_2 A(\frac{x_2^2+y_2^2}{2})+ A'(\frac{x_2^2+y_2^2}{2}) dx_2\wedge dy_2.$$
It is equivariantly closed  on $\C^3$ for $D_\epsilon$ and equivalent to $ p_2(\epsilon)$ in cohomology.
Indeed one verifies that $$P_2(\epsilon)-\epsilon_2=D_\epsilon \nu$$ with
$$\nu= \frac{1}{2}\left(\frac{A(\frac{x_2^2+y_2^2}{2})-1}{\frac{x_2^2+y_2^2}{2}} \right)(x_2dy_2-y_2dx_2).$$

Remark that
$$\frac{1}{2\pi}\int_{x_2,y_2}P_2(\epsilon)=\frac{1}{2\pi}\int_0^{\infty} A'(r) dr d\theta=-1$$
since $A(0)=1$.
In other words $-P_2(\epsilon)$ is the reciproc image of the equivariant Thom form of $\C$ by the fibration
$U: \C^3\to \C$ given by $(z_1,z_2,z_3)\to z_2$.

The support of $P_2(\epsilon)$ intersects the fiber of $\mu$ in a compact set, so $p_2$ is $\mu$-compact.
So $W_\alpha(P_2)(\epsilon)$ will be a compactly supported form on $X_1$ equivalent to $W_\alpha(p_2)(\epsilon)$.
Let us compute  $W_\alpha(P_2)(\epsilon)=\Th_2(\epsilon).$
By definition, $\Th_2(\epsilon)$ is the horizontal projection of
$(\epsilon_2+2\omega_1(\epsilon))  A(\frac{x_2^2+y_2^2}{2})+ A'(\frac{x_2^2+y_2^2}{2}) dx_2\wedge dy_2$ on $P$. The horizontal projection of
$dx_2\wedge dy_2$ is $$dx_2\wedge dy_2+\left(\iota(J_X   )(dx_2\wedge dy_2)\right)\wedge \alpha= dx_2\wedge dy_2- 2 (x_2 dx_2+y_2dy_2)\wedge \alpha.$$
Consider  the two form
$$ \Gamma_2=2 A(\frac{x_2^2+y_2^2}{2})\omega_1 + A'(\frac{x_2^2+y_2^2}{2}) dx_2\wedge dy_2-2 A'(\frac{x_2^2+y_2^2}{2}) (x_2 dx_2+y_2dy_2)\wedge \alpha.$$

Thus $d\Gamma_2=0$ and
$$\Th_2(\epsilon)=((\epsilon_1+\epsilon_2/2)|z_1|^2+
(\epsilon_3+\epsilon_2/2)|z_3|^2) A(\frac{x_2^2+y_2^2}{2})+\Gamma_2$$ on $P$.
We have $$W_\alpha(\nu)=\nu-\frac{1}{2}\left(A(\frac{x_2^2+y_2^2}{2})-1 \right)\alpha.$$

It is easy to verify that indeed $\Th_2(\epsilon)$ is equivariantly closed, basic and
that $\Th_2(\epsilon)-W_\alpha(p_2)=D_\epsilon W_\alpha(\nu)$.
The boundary term  $D_\epsilon W_\alpha(\nu)$ does not contribute to the integral, since integrated against a test function, it gives  a differential form on $X_1$ rapidly decreasing at $\infty$, as easy to check.
We have the equation
$$\frac{d}{ds}\CF(s,\epsilon,t)=\frac{i m}{(2 i\pi)^{2}}\int_{X_1} \Th_2(\epsilon) e^{i s\,m\epsilon_2}e^{i (t+ 2 ms)\omega_1(\epsilon)}.$$
This shows that
  $$N_\epsilon(t)=\frac{-im}{(2i\pi)^2}\int_{P/T} \int_{s=0}^1\Th_2(\epsilon)e^{it\omega_1(\epsilon)} e^{is W_\alpha(p_2)(\epsilon)}$$ so $N_\epsilon(t)=\CF_\epsilon(t J^*)-e^{i m\epsilon_2}F_\epsilon((t+2m)J^*)$ is analytic in $\epsilon$.

Consider $\epsilon=0$ in this formula.
So $N_{\epsilon=0}(t)$ is equal to $$-m \int_{s=0}^1 (t+2ms) ds \left(\frac{1}{(2\pi)^2}\int_{P/T} \omega_1\wedge \Th_2(0)\right) = m(t+m)$$
provided we prove that $\int_{P/T} \omega_1\wedge (-\Th_2(0))=(2\pi)^2$.
As we have seen, the form $\frac{-1}{2\pi} P_2(\epsilon) $ is the equivariant Thom form of the normal bundle of $\C^3\cap z_2=0$ in $\C^3$.
So $\frac{-1}{2\pi} W_\alpha(P_2)(\epsilon)$ is the equivariant Thom form of the normal bundle of $(P\cap \{z_2=0\})/T$ in $P/T$.
 So we obtain that $\frac{-1}{(2\pi)^2}\int_{P/T} \omega_1\wedge \Th_2(0)=\frac{1}{2\pi}\int_{(P\cap \{z_2=0\})/T} \omega_1$.
 But  $(P\cap \{z_2=0\})/T$ is  $\{|z_1|^2/2+|z_3|^2/2=1\}/T$, and has volume $2\pi$ for $\omega_1$. So we obtain our constant $1$.

\printbibliography
\end{refsection}

\end{document}